\documentclass{article}
\usepackage{authblk}
\usepackage[utf8]{inputenc}
\bibliographystyle{plainurl}
\title{Improved Approximation Factor for Adaptive Influence Maximization via Simple Greedy Strategies}
\author[1]{Gianlorenzo D'Angelo}
\author[1]{Debashmita Poddar}
\author[1]{Cosimo Vinci}
\affil[1]{Gran Sasso Science Institute, Italy}

\affil[ ]{\textit {\{gianlorenzo.dangelo,debashmita.poddar,cosimo.vinci\}@gssi.it}}
\date{}
\newcommand{\E}{\mathbb{E}}
\newcommand{\RP}{\mathbb{R}_{\geq 0}}
\usepackage{amsmath,amsfonts,amsthm,amssymb,bbm,bm}
\usepackage{appendix}
\newtheorem{theorem}{Theorem}

\newtheorem{lemma}{Lemma}
\newtheorem*{remark}{Remark}
\begin{document}
\maketitle
\begin{abstract}
In the \emph{adaptive influence maximization problem}, we are given a social network and a budget $k$, and we iteratively select $k$ nodes, called seeds, in order to maximize the expected number of nodes that are reached by an influence cascade that they generate according to a stochastic model for influence diffusion. 
The decision on the next seed to select is based on the observed cascade of previously selected seeds.
We focus on the \emph{myopic feedback model}, in which we can only observe which neighbors of previously selected seeds have been influenced and on the \emph{independent cascade} model, where each edge is associated with an independent probability of diffusing influence. While adaptive policies are strictly stronger than non-adaptive ones, in which all the seeds are selected beforehand, the latter are much easier to design and implement and they provide good approximation factors if the adaptivity gap, the ratio between the adaptive and the non-adaptive optima, is small. Previous works showed that the adaptivity gap is at most $4$, and that simple adaptive or non-adaptive greedy algorithms guarantee an approximation of $\frac{1}{4}\left(1-\frac{1}{e}\right)\approx 0.158$ for the adaptive optimum. This is the best approximation factor known so far for the adaptive influence maximization problem with myopic feedback.

In this paper, we directly analyze the approximation factor of the non-adaptive greedy algorithm, without passing through the adaptivity gap, and show an improved bound of $\frac{1}{2}\left(1-\frac{1}{e}\right)\approx 0.316$. Therefore, the adaptivity gap is at most $\frac{2e}{e-1}\approx 3.164$. To prove these bounds, we introduce a new approach to relate the greedy non-adaptive algorithm to the adaptive optimum. The new approach does not rely on multi-linear extensions or random walks on optimal decision trees, which are commonly used techniques in the field. We believe that it is of independent interest and may be used to analyze other adaptive optimization problems. Finally, we also analyze the adaptive greedy algorithm, and show that guarantees an improved approximation factor of  $1-\frac{1}{\sqrt{e}}\approx 0.393$.
\end{abstract}

\section{Introduction}
In the Influence Maximization (IM) problem, we are given a social network, a stochastic model for diffusion of influence over the network, and a budget $k$, and we ask to find a set of $k$ nodes, called \emph{seeds}, that maximize their \emph{spread of influence}, which is the expected number of nodes reached by a cascade of influence diffusion generated by the seeds according to the given diffusion model. One of the most studied models for influence diffusion is the Independent Cascade (IC), where each edge is associated with an independent probability of transmitting influence from the source node to the tail node. In the IC model the spread of influence is a monotone submodular function of the seed set, therefore a greedy algorithm, which iteratively selects a seed with maximum marginal gain,  guarantees a $1-\frac{1}{e}$ approximation factor for the IM problem~\cite{Kempe2015a}.
Since its definition by Domingos and Richardson~\cite{Domingos2001,Richardson2002} and formalization as an optimization problem by Kempe et al.~\cite{Kempe2003,Kempe2015a}, the IM problem and its variants have been extensively investigated, motivated by applications in viral marketing~\cite{Chen10}, adoption of technological innovations~\cite{Goldberg2013}, and outbreak or failure detection~\cite{Leskovec2007}. See~\cite{DBLP:series/synthesis/2013Chen,Li2018} for surveys on the IM problem.

Recently, Golovin and Krause~\cite{Golovin2011a} initiated the study of the IM problem under the framework of adaptive optimization, where, instead of selecting all the seeds at once at the beginning of the process, we can select one seed at a time and observe, to some extent, the portion of the network reached by a new selected seed. The advantage is that the decision on the next seed to choose can be based on the observed spread of previously selected seeds, usually called \emph{feedback}.
Two main feedback models have been introduced: in the \emph{full-adoption} feedback the whole spread from each seed can be observed, while, in the \emph{myopic} feedback, one can only observe the direct neighbors of each seed.

Golovin and Krause~\cite{Golovin2011a} considered the Independent Cascade model and showed that, under full-adoption feedback, the objective function satisfies the property of \emph{adaptive submodularity} (introduced in the same paper) and therefore an adaptive greedy algorithm achieves a $1-\frac{1}{e}$ approximation factor for the adaptive IM problem. In their arXiv version, they retracted a claim (appeared in their previous conference version) in which they mistakenly showed that the adaptive submodularity holds even under the myopic feedback model, and this property would have guaranteed that the adaptive greedy is a constant factor approximation algorithm (under the myopic feedback model). Anyway, they conjectured that there exists a constant factor approximation algorithm for the myopic feedback model, which indeed has been found by Peng and Chen~\cite{Peng2019} who showed that both the adaptive and non-adaptive greedy algorithms guarantee a $\frac{1}{4}\left(1-\frac{1}{e}\right)$-approximation. In particular, they showed that the adaptivity gap, which is the supremum, over all possible inputs, of the ratio between the spread of an optimal adaptive policy and that of an optimal non-adaptive one, is upper-bounded by 4. By combining this bound with the approximation factor of both the adaptive and non-adaptive greedy algorithms for the non-adaptive problem they obtain a $\frac{1}{4}\left(1-\frac{1}{e}\right)$-approximation factor. To prove their upper-bound on the adaptivity gap, Peng and Chen use an approach that is inspired by Bradac et al.~\cite{Bradac19}, which in turn is based on the approach introduced by Gupta et al.~\cite{Gupta2016,Gupta2017} in the context of stochastic probing. To relate a non-adaptive solution to an optimal adaptive one, they consider the decision tree of an optimal adaptive solution and construct a non-adaptive policy by performing a random root-leaf walk in the tree, according to a probability distribution induced by the tree. Note that computing the non-adaptive policy that guarantees the upper-bound on the adaptivity gap would require to know an optimal decision tree. However, this approach only requires to show the existence of such a non-adaptive policy since it allows us to bound the adaptivity gap and non-adaptive approximation factor is obtained by combining the non-adaptive approximation factor with the adaptivity gap. In the same paper, Peng and Chen showed that both the adaptive greedy and the non-adaptive greedy algorithms cannot achieve a factor better than $\frac{e^2+1}{(e+1)^2}<0.607<1-\frac{1}{e}$ of the adaptive optimum, and that the adaptivity gap is at least $\frac{e}{e-1}\approx 1.582$. 

In~\cite{Chen2019}, the same authors showed some upper and lower bounds on the adaptivity gap in the case of full-adoption feedback, still under independent cascade, for some particular graph classes. In order to show these bounds, they followed an approach introduced by Asadpour and Nazerzadeh~\cite{Asadpour16} which consists in transforming an adaptive policy into a non-adaptive one by means of multilinear extensions, and constructing a Poisson process to relate the influence spread of the non-adaptive policy to that of the adaptive one. For general graphs, a non-constant upper bound for the adaptivity gap under the full-adoption feedback has been recently shown in \cite{DPV21}. 

\subsection*{Our Contribution}
In this paper, we focus on the myopic model and analyze the approximation factor of the non-adaptive greedy algorithm without passing through the adaptivity gap. We show that the algorithm achieves at least a fraction of $\frac{1}{2}\left(1-\frac{1}{e}\right)\approx 0.316$ of the adaptive optimum (Theorem \ref{thm1}). By definition, this implies that the adaptivity gap is at most $\frac{2e}{e-1}\approx 3.164$ (Remark \ref{adgapcor}). For both approximation ratio and adaptivity gap we obtain a substantial improvement with respect to the upper bounds obtained in \cite{Peng2019}, which are $\frac{1}{4}\left(1-\frac{1}{e}\right)\approx 0.158$ and $4$, respectively. 

Non-adaptive policies are strictly weaker than adaptive ones, since the latter can implement the former by simply ignoring any kind of feedback. On the other hand, adaptive policies are difficult to implement as they require to probe suitable seeds and to observe the corresponding feedback, which can be expensive and error-prone. Moreover, they may consist of exponentially-large decision trees that are hard to compute and store. In contrast, non-adaptive policies are easy to design and implement and are independent from the feedback. In particular, the non-adaptive greedy algorithm has been extensively studied and successfully applied in the field of influence maximization. For the non-adaptive setting, several efficient implementation of the greedy algorithm have been devised that allows us to use it in large real-world networks~\cite{Cohen14,Goyal11,Leskovec2007,Nguyen16,Tang15,Tang2014}. Our results show that the simple non-adaptive greedy algorithm performs well, even in the adaptive setting where we compare it with the adaptive optimum.

To show our bounds, we introduce a new approach that relate the non-adaptive greedy policy to an optimal adaptive solution. The new approach is not based on multilinear extensions and poisson processes (like, e.g.~\cite{Asadpour16,Calinescu11,CVZ14,Chen2019}) neither on random walks on the optimal decision trees (like, e.g.~\cite{Bradac19,Gupta2016,Gupta2017,Peng2019}), which are the main tools used so far to relate adaptive and non-adaptive policies, and to bound adaptivity gaps. Previous techniques derive adaptive approximation factors by combining non-adaptive approximation ratios with a bound on the adaptivity gap which is obtained by showing the existence of a ``good'' non-adaptive policy. However such a policy is hard to compute as it is usually  constructed by using an optimal adaptive policy. Our approach, instead, directly analyzes a non-adaptive policy and therefore provides the exact policy that gives the desired adaptivity gap and adaptive approximation factor. We believe that our approach is of independent interest and may be used to bound approximation factors and adaptivity gaps of different adaptive optimization problems. 

Our new approach consists in defining a simple randomized non-adaptive policy whose performance is not higher than that guaranteed by the greedy algorithm, and to relate such randomized non-adaptive policy with the optimal adaptive policy. In order to recover good properties of the objective function (like, e.g. submodularity) that usually guarantee good approximations when adopting  greedy strategies, we introduce an artificial diffusion process in which each seed has two chances to influence its neighbours. A similar process was introduced by Peng and Chen~\cite{Peng2019}, who consider a diffusion model in which the seeds appear in multiple copies of the influence graphs, so that, roughly speaking, each node has several chances to influence the neighbours, and the main machinery they consider to relate optimal adaptive strategies with optimal non-adaptive ones is that in~\cite{Bradac19}.
Our direct and more refined analysis of the non-adaptive greedy algorithm improves  at the same time both the approximation ratio and the adaptivity gap. 

To illustrate our approach, in Section~\ref{sec_example} we first apply our machinery to the simpler setting of adaptive monotone submodular maximization under cardinality constraint. As observed by Asadpour and Nazerzadeh~\cite{Asadpour16}, a constant factor approximation of the non-adaptive greedy policy applied to such setting can be obtained by combining the approximation ratio over the non-adaptive optimum and the adaptivity gap (which is equal to $\left(1-\frac{1}{e}\right)$~\cite{Asadpour16}); this leads to an approximation guarantee of $\left(1-\frac{1}{e}\right)^2\approx 0.399$. We~give a more refined analysis of the non-adaptive greedy policy and we show that its approximation ratio is at least $\frac{1}{2}\left(1-\frac{1}{e^2} \right)\approx 0.432$. Asadpour and Nazerzadeh \cite{Asadpour16} also showed that a so-called continuous greedy policy \cite{Calinescu11,CVZ14} achieves an approximation ratio of $1-\frac{1}{e}-\epsilon$ (in polynomial time w.r.t.~$\frac{1}{\epsilon}$). Since the continuous greedy policy is a non-adaptive policy, this bound is strictly better than ours. However, the non-adaptive greedy policy is simpler and deterministic, while the continuous greedy policy is randomized and more complex.

Finally, in Section \ref{ad-sec}, we analyze the adaptive version of the greedy algorithm applied to the adaptive influence maximization problem; by resorting again to an artificial diffusion process, we show that such adaptive algorithm guarantees an approximation ratio of $1-\frac{1}{\sqrt{e}} ~\approx~0.393$; thus, we further improve the upper bound shown for the non-adaptive greedy algorithm, and we also give a more refined analysis of the adaptive greedy algorithm than that of Peng and Chen \cite{Peng2019}, who showed a $\frac{1}{4}\left(1-\frac{1}{e}\right)\approx 0.158$ approximation factor.

\subsection*{Related Work}
\subparagraph*{Adaptive Influence Maximization.}
The adaptive influence maximization problem under the independent cascade model has been studied by~\cite{Chen2019, Chen2019a,  Han2018a,Peng2019,Sun2018, Tang2019, Tong2019, Tong2017,  DBLP:journals/corr/VaswaniL16, Yuan2017}. These include studies on several classes of graphs and different feedback models. 

The most studied feedback model is the full-adoption feedback, in which the entire influence spread generated by each selected node is observed. Golovin and Krause~\cite{Golovin2011a} show that the full-adoption feedback model satisfies the adaptive submodularity property and, by exploiting such property, provide a $(1-1/e)$ approximation adaptive algorithm for the problem of finding the best adaptive policy. Chen and Peng~\cite{Chen2019} study the adaptivity gap in the full-adoption feedback model under certain restrictions on the graph topology. In particular, they show that the adaptivity gap of in-arborescence (resp. out-arborescence) graphs belongs to $[e/(e-1),2e/(e-1)]$ (resp. $[e/(e-1),2]$). For general graphs with $n$ nodes (resp. in-arborescence graphs), an upper bound of $O(n^{1/3})$ (resp. $2e^2/(e^2-1)$) for the adaptivity gap under the full-adoption feedback has been recently shown in \cite{DPV21}.

Golovin and Krause~\cite{Golovin2011a} conjectured that the influence maximization problem under the myopic feedback model admits a constant approximation algorithm and a constant adaptivity gap, despite the adaptive submodularity does not hold under such feedback model. Since then, several studies have been conducted on the myopic feedback model. Some recent works include that of Salha et al.~\cite{Salha2018}, in which they consider a modified version of the independent cascade model which gives multiple chances to the seeds to activate their  neighbours, and  consider a different utility function which needs to maximized. They demonstrate that the myopic feedback model is adaptive submodular under such modified diffusion model, and provide an adaptive greedy policy that achieves a $1-1/e$ approximation ratio for the problem of finding the best adaptive policy. The work of Peng and Chen~\cite{Peng2019} is the first one that provides a constant upper bound on the adaptivity gap under the myopic feedback model. They introduce a policy in which each seed appears in multiple copies of the original graph; furthermore, this hybrid policy connects the adaptive and the non-adaptive policies via a  machinery used by \cite{Gupta2016,Gupta2017,Bradac19} in the context of stochastic probing. They show that the adaptivity gap is in $\left[e/(e-1),4\right]$, and the upper bound is turned into $(1-1/e)/4$ approximation algorithm. 

Other diffusion and feedback models have been also studied, e.g., the multi-round diffusion model \cite{Sun2018}, the general feedback model \cite{Tong2019}, and the partial feedback model \cite{Yuan2017}. Han et al.~\cite{Han2018a} conduct a study on the batch selection of seeds at each step of the diffusion process. 
Tong et al.~\cite{Tong2015} introduce the dynamic independent cascade model, which captures the dynamic nature of real-world social networks. Finally, Singer et al.~\cite{Badanidiyuru2016,Rubinstein2015, Seeman2013, Singer2016}, in their line of research on adaptivity gaps, studied a two-stage process called adaptive seeding, that exhibits some similarities with the influence maximization problem under the myopic feedback model. 

\subparagraph*{Other Topics in Adaptive Optimization.}
Beyond influence maximization problems, the adaptive optimization and the adaptivity gap have been generally studied for many other stochastic settings~\cite{DBLP:conf/stacs/AdamczykSW14,Asadpour16, DBLP:conf/wine/AsadpourNS08, Bradac19,DBLP:journals/mor/ChanF09,   DBLP:conf/soda/DeanGV05, DBLP:journals/mor/DeanGV08,DBLP:conf/latin/GoemansV06,   Golovin2011a,DBLP:conf/icml/GuilloryB10, Gupta2016,Gupta2017,DBLP:conf/ciac/HellersteinKL15,DBLP:journals/corr/abs-1803-07639,Bradac19}.

A general adaptive optimization framework deals with the fact that an item will reveal its actual state only when it has been irrevocably included in the final solution, and the main goal is to optimize an objective function under such uncertainty. Stochastic variants of packing integer programs, 0/1 knapsack problems, and covering problems, have been studied under the perspective of adaptive optimization in \cite{DBLP:conf/soda/DeanGV05, DBLP:journals/mor/DeanGV08,DBLP:conf/latin/GoemansV06}, respectively.

Asadpour et al.~\cite{Asadpour16, DBLP:conf/wine/AsadpourNS08} study the adaptivity gap of the stochastic submodular maximization problem under a matroid constraint, in which the goal is to select a subset of items satisfying a matroid constraint, that maximizes the value of a monotone submodular value function defined on the random states of the selected items. In~\cite{Asadpour16}, the authors consider an adaptive greedy policy (often denoted as myopic policy) to approximate the optimal value of the best adaptive policy, and they show that it achieves a $1/2$ approximation ratio under general matroid constraint, and a $1-\frac{1}{e}$ approximation ratio  under cardinality constraint. Interesting variants or extensions of the above optimization problem have been  considered in \cite{Bradac19,Gupta2016,Gupta2017}.

Chan and Farias~\cite{DBLP:journals/mor/ChanF09} study the efficiency of adaptive greedy policies applied to a general class of stochastic optimization problems, called stochastic depletion problems, in which the adaptive policy, at each step, chooses an action that generates a reward and depletes some of the items; they show that, under certain structural properties, a simple adaptive greedy policy guarantees a constant factor approximation of the best adaptive policy. Hellerstein et al.~\cite{DBLP:conf/ciac/HellersteinKL15} use an optimal decision tree to build a connection between the adaptive and the non-adaptive policies, and show that the adaptivity gap of stochastic submodular maximization under cardinality constraint is $1-\frac{1}{e^\tau}$, where $\tau$ is the minimum value of the probability that an  item is in some state.

\subsection*{Organization of the Paper}
In the next section we introduce our new approach by applying it to a simpler setting. In Section~\ref{sec_prel} we introduce the adaptive influence maximization problem along with the necessary notation and definitions. In Section~\ref{sec_inarb} we give the main results of the paper, that is the improved approximation factor for the non-adaptive greedy algorithm and the upper bound on the adaptivity gap for adaptive influence maximization. In Section~\ref{ad-sec} we show an improved approximation ratio for the adaptive greedy algorithm. In Section~\ref{sec_future} we outline possible research directions. Finally, an appendix contains the proofs of the technical lemmas which the main theorems are based on.

\section{Overview of the Approach: Stochastic Submodular Maximization}\label{sec_example}
In this section, we illustrate our machinery by applying part of it to the problem of maximizing a stochastic submodular set function under cardinality constraints~\cite{Asadpour16}. 

For two integers $h$ and $k$, $h\leq k$, let $[k]_h:=\{h,h+1,\ldots, k\}$ and $[k]:=[k]_1$. A function $f:\RP^n\rightarrow \RP$ is {\em a monotone submodular value function} if, for any vectors $x,y\in \RP$, we get $f(x\vee y)+f(x\wedge y)\leq f(x)+f(y)$, where $x\wedge y$ denotes the componentwise minimum and $x\vee y$ denotes the componentwise maximum.

Let $[n]$ be a finite set of $n$  items, and let $\bm \theta=(\bm \theta_1,\ldots, \bm \theta_n)$ be a vector of $n$ real, non-negative, and independent {\em state} random variables, where each $\bm \theta_i$ returns the state $\theta_i\in \RP$ associated to each item $i$ following a certain probability distribution. Let $f:\RP^n\rightarrow \RP$ be the {\em objective function}, that is a monotone submodular value function. For any $S\subseteq [n]$, let $\bm \theta(S):=(\bm \theta_1(S),\ldots, \bm \theta_n(S))$ be the {\em partial state} random variable, that is a random vector defined as $\bm \theta_i(S)=\bm \theta_i$ if $i\in S$, $\bm \theta_i(S)=0$ otherwise. With a little abuse of notation, we assume that vector $\bm \theta(S)$ gives also information on the set $S$ which $\bm \theta(S)$ is based on.

For a given integer $k\geq 0$, we aim at selecting a subset $S\subseteq [n]$ subject to cardinality constraint $|S|=k$, that maximizes the expected value $\E_{\bm \theta}[f(\bm \theta(S))]$. To guarantee a (possibly) better solution we can resort to an {\em adaptive policy}, that, at each step, observes the partial state $\xi\sim\bm \theta(U)$, where $U$ denotes the set of items previously selected, and selects another further item $\pi(\xi)\in [n]\setminus U$;  after $k$ iterations, the policy returns a set $U_{\bm\theta,k}(\pi)\subseteq [n]$ with $|U_{\bm\theta,k}(\pi)|=k$, which is a random set depending on the state of $\bm \theta$. The {\em stochastic monotone submodular maximization problem} (SMSM) takes as input a set of items $[n]$, a random vector~$\bm \theta$, a monotone submodular value function $f$, and an integer $k\in [n]$, and asks to find an adaptive policy $\pi$ that maximizes the expected value $\E_{\bm \theta}[f(\bm \theta(U_{\bm\theta,k}(\pi)))]$. 

In general, computing an optimal adaptive strategy is a computationally hard problem. Furthermore, in many contexts it is difficult to implement adaptive strategies, and non-adaptive strategies (in which the solution is chosen without observing the states of the random variables) is a more feasible choice. Asadpour and Nazerzadeh \cite{Asadpour16} show that a non-adaptive randomized continuous greedy algorithm guarantees a $\left(1-\frac{1}{e}-\epsilon\right)$ approximation for the SMSM problem (in polynomial time w.r.t. $1/\epsilon$); however, the proposed approach resorts to a quite sophisticated randomized algorithm. They also consider a simpler and deterministic {\em non-adaptive greedy algorithm} as approximation algorithm, that starts from an empty set $S:=\emptyset$ and, at each iteration $t\in [k]$, adds in $S$ the item $i\in [n]\setminus S$ maximizing $\E_{\bm \theta}[f(\bm \theta(S\cup\{i\}))]$. They show that the greedy algorithm guarantees an approximation factor of $\frac{1}{2}\left(1-\frac{1}{e}\right)\approx 0.316$ if the chosen subsets are subject to a matroid constraint, that can be reduced to $\left(1-\frac{1}{e}\right)^2\approx 0.399$ when considering a cardinality constraint $|S|\leq k$ only (i.e., uniform matroid constraint). 

In the following theorem, we give a better analysis of the greedy algorithm under cardinality constraints, and we show that the approximation factor increases to $\frac{1}{2}\left(1-\frac{1}{e^2}\right)\approx 0.432$. 

\begin{theorem}\label{thm0}
The non-adaptive greedy algorithm is a $\frac{1}{2}\left(1-\frac{1}{e^2}\right)$ approximation algorithm for the SMSM problem. 
\end{theorem}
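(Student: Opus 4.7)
The plan is to relate the non-adaptive greedy output value to the adaptive optimum $G^* := \E_{\bm\theta}[f(\bm\theta(U_{\bm\theta,k}(\pi^*)))]$ directly, without passing through the adaptivity gap of Asadpour and Nazerzadeh.

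First, I would observe that the expected objective $F(S) := \E_{\bm\theta}[f(\bm\theta(S))]$ is a monotone submodular set function. This inherits from the monotone submodularity of $f$ on $\RP^n$, since $\bm\theta(S) \leq \bm\theta(T)$ componentwise whenever $S \subseteq T$, so the submodularity inequality for $f$ applies pointwise and survives averaging over $\bm\theta$. Consequently, the non-adaptive greedy algorithm is exactly the standard submodular greedy applied to $F$, and one may invoke the classical decreasing-marginal-returns machinery.

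Next, I would structure the argument by running greedy hypothetically for $2k$ steps to obtain a chain $S_0 \subset S_1 \subset \ldots \subset S_{2k}$ and combining two inequalities. The first is the standard halving: by decreasing marginal gains of a submodular greedy, the first $k$ picks contribute at least as much as the last $k$, so $F(S_k) \geq \tfrac{1}{2} F(S_{2k})$. The second is a \emph{direct} comparison with the adaptive optimum, namely $F(S_{2k}) \geq (1 - 1/e^2) G^*$; multiplying the two yields the target factor $\tfrac{1}{2}(1 - 1/e^2)$.

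To prove the direct comparison I would try to establish the per-step contraction $F(S_t) - F(S_{t-1}) \geq (G^* - F(S_{t-1}))/k$, which iterates to the claim via the usual $(1-1/k)^{2k} \leq e^{-2}$ estimate. Since greedy takes the best single-item marginal, it suffices to exhibit, for every $S$, an averaging distribution over $k$ items with average marginal at least $(G^* - F(S))/k$. The natural candidate is to sample a uniformly random element of $U_{\bm\theta, k}(\pi^*)$ for an independent realization $\bm\theta$; the submodularity of $F$ then reduces the question to bounding $\E_{\bm\theta}[F(S \cup U_{\bm\theta, k}(\pi^*))]$ from below by $G^*$. This randomized non-adaptive policy is the ``simple randomized non-adaptive policy whose performance is not higher than that guaranteed by the greedy algorithm'' advertised in the introduction.

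The main obstacle I anticipate is precisely this last bound. Plain monotonicity gives only $\E_{\bm\theta}[F(S \cup U_{\bm\theta, k}(\pi^*))] \geq \E_{\bm\theta}[F(U_{\bm\theta, k}(\pi^*))]$, which evaluates the adaptively chosen set on a \emph{fresh} independent state and therefore loses exactly one adaptivity-gap factor relative to $G^*$. This is where the ``artificial process in which each seed has two chances'' promised in the introduction has to enter: I would introduce a coupled pair of independent copies $\bm\theta^{(1)}, \bm\theta^{(2)}$ of the state so that each item effectively has two independent draws, and design an auxiliary functional $F^{(2)}$ that (a) dominates $F$ pointwise, (b) remains monotone submodular, and (c) satisfies $\E[F^{(2)}(S \cup U_{\bm\theta^{(1)}, k}(\pi^*))] \geq G^*$ under a suitable coupling with the second copy. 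Making this two-copy coupling mesh with the submodular averaging argument is the step that replaces the multilinear-extension and random-walk-on-decision-tree arguments used in prior work, and I expect most of the technical effort to go into verifying (c).
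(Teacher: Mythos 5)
Your diagnosis of the central obstacle is exactly right (evaluating the adaptively chosen set on a fresh state loses an adaptivity-gap factor), and your randomized averaging device---sampling an element of $U_{\bm\theta,k}(\pi^*)$, which induces $\mathbb{P}[i]=x_i/k$---is the same ${\sf Rand}_t$ policy the paper uses. But your decomposition ``halving $\times$ $2k$-step contraction'' hinges on a per-step inequality that is strictly stronger than anything the two-copy machinery can deliver. You need $\max_i \Delta_F(i|S_t)\geq (G^*-F(S_t))/k$; note that if this held for every $t$, you could stop after $k$ steps and conclude $F(S_k)\geq (1-(1-1/k)^k)\,G^*\geq (1-1/e)\,G^*$, so the $2k$-step run and the halving would be pointless---a strong sign the inequality is out of reach. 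What the two-copy argument actually yields (and what the paper proves) is the weaker $\max_i \Delta_F(i|S_t)\geq \frac{1}{k}\bigl(G^*-\E_{\bm\theta,\hat{\bm\theta}}[f(\bm\theta(S_t)\vee\hat{\bm\theta}(S_t))]\bigr)\geq \frac{1}{k}\bigl(G^*-2F(S_t)\bigr)$: the benchmark is eroded by \emph{twice} the current value, because the hybrid policy must grant the greedy set two independent draws. Iterating this recursion for $k$ steps already gives $\frac{1}{2}\bigl(1-(1-2/k)^k\bigr)\geq \frac{1}{2}(1-1/e^2)$, which is the paper's route; plugging the same (correct) recursion into your scaffolding instead---run $2k$ steps, then halve---yields only about $\frac{1}{4}(1-e^{-4})\approx 0.245$, well below the target. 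So the halving step is not a harmless refactoring: it only recombines to the right constant if paired with the unprovably strong contraction.

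The auxiliary functional $F^{(2)}$ as you axiomatize it cannot close this gap, because properties (a) and (c) are incompatible with the chaining you need. From (b) and (c) you get $k\max_i\Delta_{F^{(2)}}(i|S)\geq G^*-F^{(2)}(S)$; to transfer this to greedy's actual objective $F$ you need both a comparison of marginals and an upper bound on $F^{(2)}(S)$ in terms of $F(S)$. Requiring $F^{(2)}\leq F$ contradicts (a), and would anyway make (c) imply the very bound $\E[F(S\cup U_{\bm\theta,k}(\pi^*))]\geq G^*$ you identified as unobtainable; the best available is $F^{(2)}\leq 2F$, which lands you back in the weaker recursion above. Moreover, (c) stated with fresh copies (each item of $S\cup U$ evaluated on draws independent of the copy used for selection) re-encounters the adaptivity gap itself. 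The paper's resolution is that the auxiliary object is not a set function at all but a \emph{hybrid}: in $\E_{\bm\theta,\hat{\bm\theta}}[f(\bm\theta(S)\vee\hat{\bm\theta}(U_{\hat{\bm\theta},k}(\pi)\cup S))]$ the policy's items keep the very states $\hat{\bm\theta}$ they were selected with---so plain monotonicity gives the lower bound by $G^*$ (Lemma~\ref{lemprel3})---and only the greedy set $S$ receives two draws. Submodularity of $f$ on vectors (Lemma~\ref{lemprel2}) then decomposes the hybrid into $\E[f(\bm\theta(S)\vee\hat{\bm\theta}(S))]$ plus the \emph{single-copy} marginals $\sum_{i\in[n]\setminus S}x_i\,\E_{\bm\theta}[\Delta(i|\bm\theta(S))]$; the factor $2$ hits only the $S$-term, never the marginals, and that asymmetry is precisely what produces the $\bigl(1-\frac{2}{k}\bigr)$-recursion and the constant $\frac{1}{2}(1-1/e^2)$.
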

For the proof of Theorem \ref{thm0} we relate the non-adaptive greedy solution with the optimal adaptive solution. We assume w.l.o.g that $k\geq 2$, otherwise the approximation ratio is equal to~$1$. For any $t\in [k]_0$, let $S_t$ be the set of $t$ items computed by the greedy algorithm at iteration $t$ (where $S_0:=\emptyset$). Let $ \pi$ be an optimal adaptive policy, and let $x=(x_1,\ldots, x_n)\in [0,1]^n$ be the vector such that $x_i$ is the probability that node $i\in [n]$ is selected by policy $\pi$. Let $OPT_A(k)$ denote the value of policy $\pi$ (i.e., the optimal value of the problem). Given $i\in [n]$, let $\bm e^{i}:=(\bm e^{i}_1,\ldots, \bm e^{i}_n)$ be a random vector where $\bm e^{i}_j$ has the same distribution of $\bm\theta_i$ if $i=j$, and $\bm e^{i}_j=0$ otherwise; given a partial state $\xi$, let $\Delta(i|\xi):=\E_{\bm e^i}[f(\xi \vee \bm e^{i})-f(\xi)],$ i.e., the expected increment of the objective function when adding an item $i$ under partial state $\xi$.

For any $t\in [k-1]_0$, as intermediate step of our analysis, we consider a {\em randomized non-adaptive policy} ${\sf Rand}_t$ that, starting from the greedy solution $S_{t}$, computes a random set $S_{\bm\rho,t}:=S_{t}\cup\{\bm \rho\}$, where $\bm\rho\in [n]$ is a random item such that $\mathbb{P}[\bm\rho=i]=x_i/k$ for any $i\in [n]$ and selected independently from any other event. Observe that the above random variable is well-defined, as $\sum_{v\in V}(x_v/k)=k/k=1$; furthermore, we observe that the expected value of $f$ under ${\sf Rand}_{t}$ is $\E_{\bm\theta,\bm\rho}[f(\bm \theta(S_{\bm\rho,t})]$. Furthermore, for any $t\in [k-1]_0$, we consider a {\em hybrid adaptive policy} ${\sf Hyb}_{t}$, that first runs the adaptive policy $\pi$, and then merges the items of $S_t$ with the items of $U_{\hat{\bm\theta},k}(\pi)$ selected by $\pi$, where $\hat{\bm \theta}$ is a random state that follows the same distribution of $\bm \theta$ but is  independent from $\bm \theta$; finally, the expected value of $f$ under ${\sf Hyb}_{t}$ is defined as $\E_{\bm\theta,\hat{\bm \theta}}[f(\bm\theta(S_t)\vee\hat{\bm \theta}(U_{\hat{\bm\theta},k}(\pi)\cup S_t))].$

A similar hybrid adaptive policy has been also considered by Asadpour and Nazerzadeh \cite{Asadpour16}, but in place of the randomized non-adaptive policy considered in our work, they resort to a non-adaptive strategy defined by a Poisson process based on the multilinear extension of the expected value function $\E_{\bm \theta}[f(*)]$. Before showing the theorem, we give some preliminary lemmas, which relate the randomized non-adaptive policy with the hybrid adaptive policy.

\begin{lemma}\label{lemprel1}
We have that $\E_{\bm \theta,\bm \rho}[f(\bm\theta(S\cup\{\bm \rho\}))]-\E_{\bm \theta}[f(\bm\theta(S))]= \sum_{i\in [n]\setminus S}x_i\cdot \E_{\bm\theta}[\Delta(i|\bm\theta(S))]$ for any $S\subseteq [n]$.
\end{lemma}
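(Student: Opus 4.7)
The plan is a direct computation that rests on linearity of expectation together with the coordinate-wise independence of $\bm\theta$. The approach is to condition on the random index $\bm\rho$, eliminate the terms where $\bm\rho$ lands inside $S$, and then identify each remaining expected marginal increment as $\E_{\bm\theta}[\Delta(i|\bm\theta(S))]$.

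First I would use the tower rule to write
\[
\E_{\bm\theta,\bm\rho}[f(\bm\theta(S\cup\{\bm\rho\}))] = \sum_{i\in[n]} \mathbb{P}[\bm\rho=i]\,\E_{\bm\theta}[f(\bm\theta(S\cup\{i\}))].
\]
For $i\in S$ the equality $S\cup\{i\}=S$ gives $\mathbb{P}[\bm\rho=i]\cdot\E_{\bm\theta}[f(\bm\theta(S))]$, and these terms combine with $-\E_{\bm\theta}[f(\bm\theta(S))]$ on the left-hand side to leave only a sum over $i\in[n]\setminus S$ of the expected marginal increments of $f$ when item $i$ is added to $S$.

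Next, for each such $i\notin S$, I would check that
\[
\E_{\bm\theta}[f(\bm\theta(S\cup\{i\})) - f(\bm\theta(S))] = \E_{\bm\theta}[\Delta(i|\bm\theta(S))].
\]
Since $i\notin S$ one has $\bm\theta(S\cup\{i\}) = \bm\theta(S) \vee \bm e$, where $\bm e$ holds $\bm\theta_i$ in position $i$ and zero elsewhere. The independence of $\bm\theta_i$ from $(\bm\theta_j)_{j\in S}$ means that, conditional on $\bm\theta(S)$, the coordinate $\bm\theta_i$ is distributed exactly like the fresh copy $\bm e^i_i$ used in the definition of $\Delta(i|\cdot)$; applying the tower rule first over $(\bm\theta_j)_{j\in S}$ and then over $\bm\theta_i$ matches the two expressions. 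Summing these per-item identities over $i\in[n]\setminus S$, weighted according to the law of $\bm\rho$, collapses into the right-hand side of the lemma.

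The calculation is routine; the only mild subtlety is to cleanly separate the ``external'' randomness in $\bm\theta$ that drives $\bm\theta(S)$ from the ``fresh'' coordinate $\bm\theta_i$ that plays the role of $\bm e^i_i$, which is precisely where the coordinate-wise independence of the state vector is invoked.
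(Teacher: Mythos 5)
Your decomposition is the right one, and it is essentially the computation the paper itself intends: condition on $\bm\rho$ via the tower rule, discard the terms with $\bm\rho\in S$, and use the coordinate-wise independence of $\bm\theta$ to identify $\E_{\bm\theta}[f(\bm\theta(S\cup\{i\}))-f(\bm\theta(S))]$ with $\E_{\bm\theta}[\Delta(i|\bm\theta(S))]$ (a step the paper asserts without comment and you justify properly). However, your last sentence does not follow from what precedes it. The weights in your sum are $\mathbb{P}[\bm\rho=i]=x_i/k$, so the computation you describe yields
\begin{equation*}
\E_{\bm\theta,\bm\rho}[f(\bm\theta(S\cup\{\bm\rho\}))]-\E_{\bm\theta}[f(\bm\theta(S))]
=\frac{1}{k}\sum_{i\in[n]\setminus S}x_i\cdot\E_{\bm\theta}[\Delta(i|\bm\theta(S))],
\end{equation*}
which is the stated right-hand side divided by $k$. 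Claiming that the weighted sum ``collapses into the right-hand side of the lemma'' silently drops this factor $1/k$; for $k\geq 2$ the two quantities you end up equating are genuinely different.

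That said, the discrepancy is not really yours: the identity you actually derived is the correct one, and the lemma as stated is off by the same factor of $k$. The paper's own proof makes the complementary error, inserting an unjustified factor $k$ in its first equality (linearity of expectation gives $\E_{\bm\theta,\bm\rho}[f(\bm\theta(S\cup\{\bm\rho\}))]-\E_{\bm\theta}[f(\bm\theta(S))]=\E_{\bm\rho}\bigl[\E_{\bm\theta}[f(\bm\theta(S\cup\{\bm\rho\}))-f(\bm\theta(S))]\bigr]$ with no factor $k$). The consistent form of the statement multiplies the left-hand side by $k$, exactly as in the parallel Lemma~\ref{lem2} for influence spread, and it is this corrected form --- equivalently, your identity with the $1/k$ --- that is invoked as inequality \eqref{equ1prel} in the proof of Theorem~\ref{thm0}. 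So your computation is sound and in fact exposes a typo in the statement; what is missing in your write-up is only the recognition that the quantity you reach differs from the stated right-hand side by the factor $1/k$, which should have been flagged rather than elided.
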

\begin{lemma}\label{lemprel2}
For any $S\subseteq [n]$, we have $\E_{\bm\theta,\hat{\bm \theta}}[f(\bm\theta(S)\vee\hat{\bm \theta}(U_{\hat{\bm\theta},k}(\pi)\cup S))]\\
\leq \E_{\bm\theta,\hat{\bm \theta}}[f( \bm\theta(S)~\vee~\hat{\bm \theta}(S)]+\sum_{i\in [n]\setminus S}x_i\cdot \E_{\bm\theta}[\Delta(i|\bm\theta(S))]$.
\end{lemma}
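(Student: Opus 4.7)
The goal is to bound the difference
\[\E_{\bm\theta,\hat{\bm\theta}}[f(\bm\theta(S)\vee\hat{\bm\theta}(U_{\hat{\bm\theta},k}(\pi)\cup S))] - \E_{\bm\theta,\hat{\bm\theta}}[f(\bm\theta(S)\vee\hat{\bm\theta}(S))]\]
by $\sum_{i\in[n]\setminus S} x_i\cdot \E_{\bm\theta}[\Delta(i|\bm\theta(S))]$. The plan is to telescope the first term via submodularity, apply diminishing returns to swap the base, and then factor the resulting expectation using the independence of $\bm\theta$ and $\hat{\bm\theta}$ together with the fact that the decision of $\pi$ to select $i$ does not depend on $\hat{\bm\theta}_i$.

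First, I would fix a realization $(\bm\theta,\hat{\bm\theta})$ and set $T := U_{\hat{\bm\theta},k}(\pi)\setminus S$, so that $\hat{\bm\theta}(U_{\hat{\bm\theta},k}(\pi)\cup S) = \hat{\bm\theta}(S)\vee\bigvee_{i\in T}\hat{\bm\theta}(\{i\})$. A telescoping application of submodularity gives
\[f(\bm\theta(S)\vee\hat{\bm\theta}(U_{\hat{\bm\theta},k}(\pi)\cup S)) - f(\bm\theta(S)\vee\hat{\bm\theta}(S)) \leq \sum_{i\in T}\bigl[f(\bm\theta(S)\vee\hat{\bm\theta}(S)\vee\hat{\bm\theta}(\{i\})) - f(\bm\theta(S)\vee\hat{\bm\theta}(S))\bigr].\]
Each inner marginal can be further bounded by the diminishing-returns consequence of submodularity, using $\bm\theta(S)\leq \bm\theta(S)\vee\hat{\bm\theta}(S)$ as the two bases, which yields $f(\bm\theta(S)\vee\hat{\bm\theta}(S)\vee\hat{\bm\theta}(\{i\})) - f(\bm\theta(S)\vee\hat{\bm\theta}(S)) \leq f(\bm\theta(S)\vee\hat{\bm\theta}(\{i\})) - f(\bm\theta(S))$.

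Next, I would take expectation over $(\bm\theta,\hat{\bm\theta})$, rewrite the sum over $T$ as a sum over all $i\in[n]\setminus S$ weighted by $\mathbf{1}[i\in U_{\hat{\bm\theta},k}(\pi)]$, and swap sum and expectation. The remaining task is to show, for each $i$, that
\[\E_{\bm\theta,\hat{\bm\theta}}\bigl[\mathbf{1}[i\in U_{\hat{\bm\theta},k}(\pi)]\cdot(f(\bm\theta(S)\vee\hat{\bm\theta}(\{i\})) - f(\bm\theta(S)))\bigr] = x_i\cdot \E_{\bm\theta}[\Delta(i|\bm\theta(S))].\]
This identity rests on two independence facts: $\bm\theta$ is independent of $\hat{\bm\theta}$, and the indicator $\mathbf{1}[i\in U_{\hat{\bm\theta},k}(\pi)]$ is measurable with respect to the collection $(\hat{\bm\theta}_j)_{j\neq i}$, hence independent of $\hat{\bm\theta}_i$. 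Granting these, the indicator factors out as $\mathbb{P}[i\in U_{\hat{\bm\theta},k}(\pi)] = x_i$, while the residual expectation equals $\E_{\bm\theta}[\Delta(i|\bm\theta(S))]$ because $\hat{\bm\theta}(\{i\})$ is distributed as $\bm e^i$ and is independent of $\bm\theta$.

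The main obstacle I anticipate is the formal justification of the second independence fact: the adaptive policy $\pi$ reveals $\hat{\bm\theta}_i$ only \emph{after} committing to select $i$, so whether or not $i$ is ever selected should depend only on the states of items chosen strictly before $i$. A rigorous argument would proceed by induction on the steps of $\pi$, showing that the prefix of $\pi$'s trajectory up to the moment $i$ is either selected or definitively excluded is a deterministic function of $(\hat{\bm\theta}_j)_{j\neq i}$. Once this measurability is established, the three steps above combine directly to yield the claimed inequality.
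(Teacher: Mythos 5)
Your proof is correct, and it shares the paper's skeleton---write the hybrid value as a base term plus per-item increments, re-base each increment at $\bm\theta(S)$ via submodularity, and use that item $i$ is selected with total probability $x_i$---but the bookkeeping is genuinely different. The paper first conditions on a partial state $\xi\sim\bm\theta(S)$ and decomposes the hybrid value exactly along the policy's trajectory: a sum over visited partial states $\hat\xi$ of the probability of visiting $\hat\xi$ and then selecting $i$, times the conditional increment $\Delta(i\,|\,\xi\vee\hat\xi)$, which is then bounded by $\Delta(i\,|\,\xi)$ using the same $\vee/\wedge$ manipulation you use; summing the visit probabilities gives $x_i$. You instead telescope pointwise over the final set $T=U_{\hat{\bm\theta},k}(\pi)\setminus S$ at the fixed base $\bm\theta(S)\vee\hat{\bm\theta}(S)$ (a submodularity inequality rather than an exact trajectory decomposition), so the intermediate states of $\pi$ never appear; all of the adaptivity is then concentrated in the single claim that $\mathbf{1}[i\in U_{\hat{\bm\theta},k}(\pi)]$ is a function of $(\hat{\bm\theta}_j)_{j\neq i}$ and hence independent of $(\bm\theta,\hat{\bm\theta}_i)$, which lets the expectation factor into $x_i\cdot\E_{\bm\theta}[\Delta(i\,|\,\bm\theta(S))]$. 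That claim is true for exactly the reason you give (a deterministic policy's trajectory up to and including the step at which $i$ is first selected is determined by coordinates other than $i$), and the induction you sketch is the right way to pin it down; it is also the one step you must not skip, since it is where the structure of adaptive selection enters. A side benefit of your route is that it makes explicit an independence fact the paper's trajectory decomposition uses silently (its quantities $\chi(i\,|\,\hat\xi)$ are treated both as indicators and as probabilities summing deterministically to $x_i$, which implicitly invokes the freshness of $\hat{\bm\theta}_i$ at selection time); the cost is that you invoke submodularity twice (once to telescope, once to re-base) where the paper uses it once.
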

\begin{lemma}\label{lemprel3}
We have that $OPT_A(k)\leq \E_{\bm\theta,\hat{\bm \theta}}[f(\bm\theta(S)\vee \hat{\bm \theta}(U_{\hat{\bm\theta},k}(\pi)\cup S))]$ for any $S\subseteq [n]$.
\end{lemma}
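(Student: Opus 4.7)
The plan is to derive the inequality directly from the monotonicity of $f$ together with the fact that $\hat{\bm\theta}$ has the same distribution as $\bm\theta$ (and is used in place of $\bm\theta$ when executing $\pi$ on the right-hand side). Concretely, I would first rewrite $OPT_A(k)$ using $\hat{\bm\theta}$: since $\hat{\bm\theta}\stackrel{d}{=}\bm\theta$ and $\pi$ is optimal,
$$OPT_A(k)=\E_{\bm\theta}[f(\bm\theta(U_{\bm\theta,k}(\pi)))]=\E_{\hat{\bm\theta}}[f(\hat{\bm\theta}(U_{\hat{\bm\theta},k}(\pi)))].$$

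Next, I would compare the random vectors pointwise. Fix any realizations of $\bm\theta$ and $\hat{\bm\theta}$ and any $S\subseteq[n]$. By the definition of the partial state operator, for each coordinate $i\in[n]$,
$$\bigl(\hat{\bm\theta}(U_{\hat{\bm\theta},k}(\pi))\bigr)_i\;\leq\;\bigl(\hat{\bm\theta}(U_{\hat{\bm\theta},k}(\pi)\cup S)\bigr)_i\;\leq\;\bigl(\bm\theta(S)\vee\hat{\bm\theta}(U_{\hat{\bm\theta},k}(\pi)\cup S)\bigr)_i,$$
since enlarging the index set from $U_{\hat{\bm\theta},k}(\pi)$ to $U_{\hat{\bm\theta},k}(\pi)\cup S$ can only turn $0$-entries into the (nonnegative) values of $\hat{\bm\theta}$, and taking a componentwise maximum with $\bm\theta(S)$ is again nondecreasing in each coordinate. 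Monotonicity of $f$ (on the partial order induced by componentwise inequality in $\RP^n$, which is implicit in the definition of a monotone submodular value function) then yields
$$f\bigl(\hat{\bm\theta}(U_{\hat{\bm\theta},k}(\pi))\bigr)\;\leq\;f\bigl(\bm\theta(S)\vee\hat{\bm\theta}(U_{\hat{\bm\theta},k}(\pi)\cup S)\bigr).$$

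Finally, I would take expectations with respect to the joint law of $(\bm\theta,\hat{\bm\theta})$ on both sides. The left-hand side does not depend on $\bm\theta$, so its expectation equals $OPT_A(k)$ by the first step, and the inequality becomes exactly the claim of the lemma. The only subtle point worth spelling out is that $U_{\hat{\bm\theta},k}(\pi)$ is measurable with respect to $\hat{\bm\theta}$ only (not $\bm\theta$), so the two sources of randomness can be handled by Fubini without any additional argument; beyond that, the proof is essentially a one-line application of monotonicity, so I do not expect any real obstacle.
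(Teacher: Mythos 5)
Your proposal is correct and is essentially the paper's own argument: both rest on the observation that $OPT_A(k)=\E_{\hat{\bm \theta}}[f(\hat{\bm \theta}(U_{\hat{\bm\theta},k}(\pi)))]$ (since $\hat{\bm\theta}$ is distributed as $\bm\theta$) together with the componentwise domination $\hat{\bm \theta}(U_{\hat{\bm\theta},k}(\pi))\leq \bm\theta(S)\vee \hat{\bm \theta}(U_{\hat{\bm\theta},k}(\pi)\cup S)$ and the monotonicity of $f$. Your write-up merely makes explicit the intermediate coordinatewise comparison and the Fubini step that the paper leaves implicit.
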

Armed with the above lemmas, we can prove Theorem \ref{thm0}.
\begin{proof}[Proof of Theorem \ref{thm0}]
For any $t\in [k]_0$, let $GR_N(t):=\E_{\bm \theta}[f(\bm \theta(S_t))]$ denote the expected value of $f$ obtained at the $t$-th iteration of the non-adaptive greedy algorithm. We have that
\begin{align}
&GR_N(t+1)-GR_N(t)=\max_{v\in V}\left[\E_{\bm \theta}[f(\bm\theta(\{v\}\cup S_t))]\right]-\E_{\bm \theta}[f(\bm\theta(S_t))]\label{equ0prel}\\
&\geq \overbrace{\E_{\bm \theta,\bm \rho}[f(\bm\theta(S_{t}\cup\{\bm \rho\}))]}^{\text{Exp. value of }{\sf Rand}_t}-\E_{\bm \theta}[f(\bm\theta(S_t))]\nonumber\\
&\geq \frac{1}{k}\cdot \sum_{i\in [n]\setminus S_t}x_i\cdot \E_{\bm\theta}[\Delta(i|\bm\theta(S_t))]\label{equ1prel}\\
&\geq \frac{1}{k}\cdot\left(\overbrace{\E_{\bm\theta,\hat{\bm \theta}}[f(\bm\theta(S_t)\vee \hat{\bm \theta}(U_{\hat{\bm\theta},k}(\pi)\cup S_t))]}^{\text{Exp. value of }{\sf Hyb}_t}-\E_{\bm\theta,\hat{\bm \theta}}[f( \bm\theta(S_t)\vee\hat{\bm \theta}(S_t))]\right)\label{equ2prel}\\
&\geq \frac{1}{k}\cdot\left(OPT_A(k)-\E_{\bm\theta,\hat{\bm \theta}}[f( \bm\theta(S_t)\vee\hat{\bm \theta}(S_t))]\right)\label{equ3prel}\\
&\geq \frac{1}{k}\cdot OPT_A(k)-\frac{1}{k}\cdot \left(\E_{\bm\theta,\hat{\bm \theta}}[f( \bm\theta(S_t))+f( \hat{\bm\theta}(S_t))]\right)\label{equ4prel}\\ 
&\geq \frac{1}{k}\cdot OPT_A(k)-\frac{1}{k}\cdot \left(\E_{\bm\theta}[f( \bm\theta(S_t))]+\E_{\hat{\bm \theta}}[f( \hat{\bm\theta}(S_t))]\right)\nonumber\\
&= \frac{1}{k}\cdot OPT_A(k)
-\frac{2}{k}\cdot \E_{\bm\theta}[f(\bm\theta(S_t)]\nonumber\\
&=\frac{1}{k}\cdot OPT_A(k)-\frac{2}{k}\cdot GR_N(t),\label{equfundprel}
\end{align}
where \eqref{equ0prel} comes from the fact that the greedy strategy, at iteration $t+1$, adds to $S_t$ the item $i$ guaranteeing the best expected value of $f$, \eqref{equ1prel} comes from Lemma \ref{lemprel1}, \eqref{equ2prel} comes from Lemma \ref{lemprel2}, \eqref{equ3prel} comes from Lemma \ref{lemprel3}, and \eqref{equ4prel} holds since $f$ is a monotone submodular value function (as $f(\bm\theta(S_t)\vee\hat{\bm \theta}(S_t))\leq f(\bm\theta(S_t)\vee\hat{\bm \theta}(S_t))+f(\bm\theta(S_t)\wedge\hat{\bm \theta}(S_t))\leq f( \bm\theta(S_t))+f( \hat{\bm\theta}(S_t))$). Thus, by \eqref{equfundprel} and some manipulations, we get $
{GR}_N(t+1)\geq \frac{1}{k}\cdot OPT_A(k)+\left(1-\frac{2}{k}\right)\cdot {GR}_N(t)$ for any $t\in [k-1]_0.$ By applying iteratively the above inequality, we get $
{GR}_N(k)\geq \frac{1}{k}\cdot \sum_{t=0}^{k-1}\left(1-\frac{2}{k}\right)^{t}\cdot OPT_A(k)=\frac{1}{2}\left(1-\left(1-\frac{2}{k}\right)^{k}\right)\cdot OPT_A(k),$ that leads to $\frac{{GR}_N(k)}{OPT_A(k)}\geq \frac{1}{2}\left(1-\left(1-\frac{2}{k}\right)^{k}\right) \geq \frac{1}{2}\left(1-\frac{1}{e^2}\right),$ and this shows the claim. 
\end{proof}
\section{Adaptive Influence Maximization under the Myopic Feedback Model: Preliminaries}\label{sec_prel}
\subparagraph*{Independent Cascade Model.} In the {\em independent cascade model} (IC), we have an {\em influence graph}  $G=(V=[n],E,(p_{uv})_{(u,v)\in E})$, where edges are directed and $p_{uv}\in [0,1]$ is an {\em activation probability} associated to each edge $(u,v)\in E$. Given a set of {\em seeds} $S\subseteq V$ which are initially \emph{active}, the diffusion process in the IC model is defined in $t\geq 0$  discrete steps as follows: (i) let $A_t$ be the set of active nodes which are activated at each step $t\geq 0$; (ii) $A_0:=S$; (iii) given a step $t\geq 0$, for any edge $(u,v)$ such that $u\in A_t$, node $u$ can activate node $v$ with probability $p_{uv}$ independently from any other node, and, in case of success, $v$ is included in $A_{t+1}$; (iv) the diffusion process ends at a step $r\geq 0$ such that $A_{r}=\emptyset$, i.e., no node can be activated at all. The size of $\bigcup_{t\leq r} {A_t}$, i.e. the number of nodes activated/reached by the diffusion process, is the {\em influence spread}. 

The above diffusion process can be equivalently defined as follows. The {\em live-edge graph} $\bm L=(V,\bm L(E))$ is a random graph made from $G$, such that each edge $(u,v)\in E$ is included in $\bm L(E)$ with probability $p_{uv}$, independently from the other edges, i.e., $\mathbb{P}[\bm L=L]=\prod_{(u,v)\in L}p_{uv}\prod_{(u,v)\in E\setminus L}(1-p_{uv})$. With a little abuse of notation, we may denote $\bm L(E)$ with $\bm L$. Given $L\subseteq E$, let $R_L(S):=\{v\in V:\text{ there exists a path from $u$ to $v$ in $L$ for some $u\in S$}\}$, i.e., the set of nodes reached by nodes in $S$ in the graph $L$. Informally, if $S$ is the set of seeds, and $L$ is a realisation of the live-edge graph, $R_L(S)$ equivalently denotes the set of nodes which are reached/activated by the above diffusion process. Let $\sigma_L(S):=|R_L(S)|$ denote the {\em influence spread} generated by the set of seeds $S$ if the realised live-edge graph is $L$, and let $\sigma(S):=\mathbb{E}_{\bm L}[\sigma_{{\bm L}}(S)]$ be the {\em expected influence spread} generated by $S$. 

\subparagraph*{Non-adaptive Influence Maximization.} The {\em non-adaptive influence maximization problem under the IC model} is the computational problem that, given an influence graph $G$ and an integer $k\geq 1$, asks to find a set of seeds $S\subseteq V$ with $|S|\leq k$ such that $\sigma(S)$ is maximized. Without loss of generality, we assume that $k\in [n]$ and, since the objective function is monotone, $|S|=k$ for any solution $S$.

Kempe et al.~\cite{Kempe2003,Kempe2015a} showed that function $\sigma$ is monotone and submodular, therefore the following greedy algorithm achieves a $1-\frac{1}{e}$ approximation factor: (i) start with an empty set of seeds $S:=\emptyset$; (ii) at each iteration $t\in [k]$, add to $S$ the node $v$ that maximizes the expected influence spread $\sigma(S\cup \{v\})$. Note that the greedy algorithm requires at each iteration to compute the value of function $\sigma$, for some set of seeds and this has been shown to be computationally intractable as it is $\#P$-hard~\cite{Chen10}. However, standard Chernoff bounds allows us estimate the value of $\sigma$ through a polynomial number of Monte-Carlo simulations by introducing an arbitrarily small additive error $\epsilon>0$, which depends on the number of simulations~\cite{Kempe2015a}. In the reminder of the paper we will omit the additional term $\epsilon$ to avoid unnecessary complicated formulas. We will refer to this algorithm as the \emph{non-adaptive greedy algorithm}.

\subparagraph*{Adaptive Influence Maximization.} Differently from the non-adaptive setting, in which all the seeds are selected at the beginning, an {\em adaptive policy} activates the seeds sequentially in $k$ steps,
one seed at each step, and the decision on the next seed to select is based on the feedback resulting from the observed spread of previously selected nodes. The feedback model considered in this work is {\em myopic}: when a node is selected, the adaptive policy observes the state of its neighbours. 

An adaptive policy under the myopic feedback model is formally defined as follows. Given $L\subseteq E$, the {\em realisation} $\phi_L:V\rightarrow 2^V$ associated to $L$ assigns to each node $v\in V$ the value $\{z\in V:(v,z)\in L\}\cup\{v\}$, i.e., the set containing $v$ and the neighbours activated by seed $v$ when $\bm L=L$. Let $\Phi$ denote the {\em random realisation}, i.e., the random variable such that $\mathbb{P}[\Phi=\phi_L]=\mathbb{P}[\bm L=L]$ for any $L\subseteq E$. Given a set $S\subseteq V$, a {\em partial realisation} $\psi:S\rightarrow 2^V$ is the restriction to $S$ of the domain of some realisation, i.e., there exists $L\subseteq E$ such that $\psi(v)=\phi_L(v)$ for any $v\in S$. Given a partial realisation $\psi:S\rightarrow 2^V$, let $dom(\psi):=S$, i.e., $dom(\psi)$ is the domain of partial realisation $\psi$, and let $Im(\psi):=\bigcup_{v\in dom(\psi)}\psi(v)$. A~partial realisation $\psi'$ is a {\em sub-realisation} of a partial realisation $\psi$ (or, equivalently,  $\psi'\subseteq \psi$), if $dom(\psi')\subseteq dom(\psi)$ and $\psi'(v)=\psi(v)$ for any $v\in dom(\psi')$. We observe that any partial realisation $\psi$ can be equivalently represented as $\{(v,\phi_L(v)):v\in dom(\psi)\}$ for some $L\subseteq E$.

An adaptive policy $\pi$ takes as input a partial realisation $\psi$ and, either returns a node $\pi(\psi)\in V$ and activates it as seed, or interrupts the activation of new seeds, e.g., by returning a string $\pi(\psi):=STOP$. In particular, an adaptive policy $\pi$ can be run as follows: (i) start from an empty realisation $\psi:=\emptyset$; (ii) if $\pi(\psi)\neq STOP$ set $\psi\leftarrow \psi\cup \{(v,\phi_L(v))\}$ and repeat (ii) until $\pi(\psi)=STOP$; (iii) at the end,  return $\psi_{\pi}:=\psi$. Let $\Psi_{\pi}$ be the random partial realisation returned by the execution of policy $\pi$. The {\em expected influence spread} of an adaptive policy $\pi$ is defined as $\sigma(\pi):=\mathbb{E}_{\bm L}[\sigma_{\bm L}(dom(\Psi_{\pi}))]$, i.e., it is the expected value of the number of nodes reached by the diffusion process at the end of the execution of policy $\pi$. We say that $|\pi|=k$ if policy $\pi$ always return a partial realisation $\psi_{\pi}$ with $|dom(\psi_{\pi})|=k$. The {\em adaptive influence maximization problem (under IC and myopic feedback)} is the computational problem that, given an influence graph $G$ and $k\in [n]$, asks to find an adaptive policy $\pi$ subject to $|\pi|=k$ that maximizes the expected influence spread $\sigma(\pi)$. 

\subparagraph*{Adaptivity gap.}Given an influence graph $G$ and an integer $k\geq 1$, let $OPT_N(G,k)$ (resp. $OPT_A(G,k)$) denote the optimal value of the non-adaptive (resp. adaptive) influence maximization problem with input $G$ and $k$. Given an integer $k\in [n]$, the {\em $k$-adaptivity gap} of $G$ is defined as $AG(G,k):=\frac{OPT_A(G,k)}{OPT_N(G,k)},$ and measures how much an adaptive policy outperforms a non-adaptive solution for the influence maximization problem applied to influence graph $G$, when the maximum number of seeds is $k$. The {\em adaptivity gap} of $G$ is defined as $AG(G):=\sup_{k\in [n]}AG(G,k)$. We observe that for $k=1$ the $k$-adaptivity gap and the approximation factors are trivially equal to 1, thus we omit such case in the following.

\section{The Efficiency of the Non-adaptive Greedy Algorithm}\label{sec_inarb}
In this section, we show that a simple non-adaptive algorithm guarantees an approximation ratio of $\frac{1}{2}\left(1-\frac{1}{e}\right)\approx 0.316$ for the adaptive influence maximization problem, thus improving the approximation ratio of $\frac{1}{4}\left(1-\frac{1}{e}\right)\approx 0.158$ given in~\cite{Peng2019}. The algorithm provided by~\cite{Peng2019} is the usual {\em non-adaptive greedy algorithm} given in~\cite{Kempe2015a} and reported in the previous section.

We observe that such algorithm is non-adaptive, i.e., despite it is used for adaptive optimization, does not resort to the use of any adaptive policy and all the seeds are selected without observing any partial realisation. 
\begin{theorem}\label{thm1}
Given an influence graph $G$ with $n$ nodes and $k\in [n]_2$, the non-adaptive greedy algorithm is a $\frac{1}{2}\left(1-\left(1-\frac{1}{k}\right)^{k}\right)\geq \frac{1}{2}\left(1-\frac{1}{e}\right)$ approximation algorithm for the adaptive influence maximization problem (under IC and myopic feedback) applied to $(G,k)$. 
\end{theorem}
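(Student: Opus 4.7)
The plan is to follow the three-step template of Theorem \ref{thm0}: introduce a randomized non-adaptive policy that lower-bounds the greedy one-step gain, introduce a hybrid adaptive policy built on an artificial two-chances diffusion process, and combine them via structural lemmas into a linear recurrence whose closed form is the claimed approximation factor. Let $\pi$ be an optimal adaptive policy, so $\sigma(\pi)=OPT_A(G,k)$, and for each $v\in V$ set $x_v:=\mathbb{P}[v\in dom(\Psi_\pi)]$, so $\sum_{v}x_v=k$. Let $S_t$ be the greedy solution after $t$ rounds and $GR_N(t):=\sigma(S_t)$. The randomized non-adaptive policy ${\sf Rand}_t$ augments $S_t$ by a random node $\bm\rho$ with $\mathbb{P}[\bm\rho=v]=x_v/k$; since greedy picks the best single augmentation,
$$GR_N(t+1)-GR_N(t)\;\geq\;\tfrac{1}{k}\sum_{v\in V\setminus S_t}x_v\bigl[\sigma(S_t\cup\{v\})-\sigma(S_t)\bigr].$$

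The new ingredient is the hybrid adaptive policy ${\sf Hyb}_t$, defined through an artificial two-chances diffusion process as foreshadowed in the introduction. Concretely, I would work with two independent live-edge graphs $\bm L$ and $\hat{\bm L}$, both distributed as the IC live-edge graph: the greedy seeds $S_t$ act on $\bm L$, while $\pi$ is simulated on the independent copy $\hat{\bm L}$ (so the myopic feedback comes from $\hat{\bm L}$), producing a random seed set $T$; ${\sf Hyb}_t$'s value is defined as the expected spread of $S_t\cup T$ in the two-chances process that gives each seed two independent attempts to activate its out-neighbours. Two structural lemmas analogous to Lemmas \ref{lemprel2}--\ref{lemprel3} then drive the analysis: (i) a lower bound ${\sf Hyb}_t\geq OPT_A(G,k)$, which follows because the contribution of $T$ alone (under $\hat{\bm L}$) already has expected size $OPT_A(G,k)$ --- the pair $(T,\hat{\bm L})$ has the same joint law as $(dom(\Psi_\pi),\bm L)$ in the genuine adaptive instance --- and monotonicity of the two-chances spread propagates this to ${\sf Hyb}_t$; and (ii) the submodularity-based inequality
$$\sum_{v\in V\setminus S_t}x_v\bigl[\sigma(S_t\cup\{v\})-\sigma(S_t)\bigr]\;\geq\;\tfrac{1}{2}{\sf Hyb}_t-GR_N(t),$$
where the factor $\tfrac{1}{2}$ is precisely the price of the two-chances trick (a two-chances spread being controllable by two coupled one-chance spreads), and the single copy of $GR_N(t)$ comes from realisation-wise submodularity of $S\mapsto|R_L(S)|$ combined with the marginal-sampling identity of Lemma \ref{lemprel1} applied to $f=\sigma$.

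Combining (i) and (ii) yields $\sum_v x_v[\sigma(S_t\cup\{v\})-\sigma(S_t)]\geq OPT_A(G,k)/2-GR_N(t)$, and substituting into the greedy one-step bound gives the recurrence $GR_N(t+1)\geq(1-\tfrac{1}{k})\,GR_N(t)+\tfrac{OPT_A(G,k)}{2k}$. Starting from $GR_N(0)=0$ and iterating, the closed-form solution is
$$GR_N(k)\;\geq\;\tfrac{OPT_A(G,k)}{2k}\sum_{t=0}^{k-1}\Bigl(1-\tfrac{1}{k}\Bigr)^{t}\;=\;\tfrac{1}{2}\bigl(1-(1-\tfrac{1}{k})^{k}\bigr)\,OPT_A(G,k)\;\geq\;\tfrac{1}{2}\bigl(1-\tfrac{1}{e}\bigr)\,OPT_A(G,k),$$
exactly the claimed bound.

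The hard part will be the design of the two-chances hybrid and the proof of inequality (ii). Under myopic feedback, the adaptive policy's seed choices in the original problem depend on the same edges that determine the spread, so any decoupling through the independent copy $\hat{\bm L}$ must be arranged so that the hybrid simultaneously (a) dominates $OPT_A(G,k)$ and (b) is controllable by a linear combination of $GR_N(t)$ and the marginal-gain sum with coefficients that give only a single factor of $2$ of loss --- as opposed to the factor of $4$ in \cite{Peng2019} that comes from routing through the adaptivity gap. The two-chances diffusion is precisely the device that makes this single-factor-$2$ trade-off possible: it absorbs the informational cost of decoupling while keeping the hybrid comparable to both $OPT_A(G,k)$ and $GR_N(t)$. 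Once (ii) is in hand, (i) is a coupling argument and the recurrence solution is elementary.
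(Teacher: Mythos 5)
Your proposal is correct and takes essentially the same route as the paper's own proof: the same randomized non-adaptive policy ${\sf Rand}_t$ with sampling probabilities $x_v/k$, the same two-chances (2-level) hybrid built by simulating the optimal policy on an independent live-edge copy $\hat{\bm L}$, your inequality (i) being the paper's Lemma~\ref{cla0}, your inequality (ii) being precisely the combination of Lemmas~\ref{cla.1}, \ref{lem1} and \ref{cla1} (with Lemma~\ref{lem1} resting on the adaptive submodularity of the 2-level model, Lemma~\ref{cla3}, recovered from realisation-wise submodularity exactly as you indicate), and the same recurrence ${GR}_N(G,t+1)\geq \frac{1}{2k}OPT_A(G,k)+\left(1-\frac{1}{k}\right){GR}_N(G,t)$ with identical closed form. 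The only difference is presentational: you bundle the paper's three structural lemmas into your single inequality (ii) and correctly flag it as the part requiring the detailed adaptive-submodularity argument.
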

In the proof of Theorem~\ref{thm1} (see Subsection \ref{proofthm1}) we relate the expected influence spread coming from the non-adaptive greedy algorithm with that of the optimal adaptive policy. We first need some notation and preliminary results. Let $G=(V=[n],E,(p_{uv})_{(u,v)\in E})$ be an influence graph, and let $k\in [n]_2$. For any $t\in [k]_0$, let $S_t$ denote the set of the first $t$ seeds selected by the greedy algorithm, so that $\sigma(S_k)$ is the expected influence spread of the solution returned by the algorithm. Let $\pi$ be an optimal adaptive policy, and let $x=(x_1,\ldots, x_n)$ be the vector such that $x_i$ is the probability that node $i$ is selected by $\pi$. As in Section~\ref{sec_example}, we resort again to a randomized non-adaptive policy to relate the expected influence spread of the greedy algorithm with that of the adaptive policy.

\subparagraph*{Randomized Non-adaptive Policy.} For any $t\in [k-1]_0$, as intermediate step of our analysis, we consider again the randomized non-adaptive policy ${\sf Rand}_t$ defined in Section~\ref{sec_example}: starting from the greedy solution $S_{t}$, we compute a random set $S_{\bm\rho,t}:=S_{t}\cup\{\bm \rho\}$, where $\bm\rho\in [n]$ is a random item such that $\mathbb{P}[\bm\rho=i]=x_i/k$ for any $i\in [n]$ and selected independently from any other event. We observe that the expected value of $f$ under ${\sf Rand}_{t}$ is $\E_{\bm \rho}[\sigma(S_{\bm\rho,t})]$. 

As a support of our analysis, we also define a new diffusion model and a hybrid adaptive policy. In particular, the new diffusion model will be used to recover certain properties connected with the submodularity that, as in Section~\ref{sec_example}, will allow us to relate the expected influence spread of the randomized non-adaptive policy with that of the hybrid adaptive policy. Furthermore, following again the approach of Section \ref{sec_example}, the hybrid adaptive policy is obtained by combining the greedy (non-adaptive) solution and the optimal adaptive one, and it will be used in our analysis to get an upper bound on the optimal adaptive spread.

\subparagraph*{2-level Diffusion Model.} In the 2-level diffusion model each selected seed $u\in V$ has two chances to influence its neighbours, and all the non-seeds have one chance only, i.e., the activation probability for all the edges $(u,v)$ is $1-(1-p_{u,v})^2$ if $u$ is a seed, and $p_{u,v}$ otherwise. More formally, let $ \hat{\bm L}$ be a live-edge graph distributed as $\bm L$ and independent from $\bm L$. Given a set of seeds $S$, the {\em 2-level live-edge graph} can be defined as $\bm L^2(S):=\bm L\cup \{\hat{\bm L}\cap \{(u,v)\in E:u\in S\}\}$. Given a set of nodes $S$, let $\sigma^2_{\bm L,\hat{\bm L}}(S):=\sigma_{\bm L^2(S)}(S)$ denote the influence spread induced by $S$ in live-edge graph $\bm L^2(S)$, and let $\sigma^2(S):=\mathbb{E}_{\bm L,\hat{\bm L}}[\sigma^2_{\bm L,\hat{\bm L}}(S)]$ denote the expected influence spread induced by $S$ under the $2$-level diffusion model. Let $ \Phi$ and $\hat{\Phi}$ denote the random realisations associated to live-edge graphs $\bm L$ and $\hat{\bm L}$, respectively.

\subparagraph*{2-Level Hybrid Adaptive Policy.}  For any $t\in [k-1]_0$, let ${\sf Hyb}^2_{t}$ be a hybrid adaptive policy defined as follows: (i) ${\sf Hyb}^2_{t}$ selects all the nodes in $S_t$ as seeds; (ii) then, ${\sf Hyb}^2_{t}$ adds to $S_t$ all the nodes that the optimal adaptive policy $\pi$ would have select when starting from the empty realisation, and observing, at each step, partial realisations coming from the live-edge graph $\hat{\bm L}$ only; in other words, for any seed $v\in V$ selected by the policy, the new set of nodes that the policy observes (to choose the next seed) is $\hat{\Phi}(v)$; (iii) finally, denoting with $\hat{\Psi}_{\pi}$ the random realisation returned by $\pi$, the expected influence spread of ${\sf Hyb}^2_{t}$ is defined as $\E_{\bm L,\hat{\bm L}}[\sigma^2_{\bm L,\hat{\bm L}}(dom(\hat{\Psi}_{\pi})\cup S_t)]$, i.e., it is the expected influence spread determined $dom(\hat{\Psi}_{\pi})\cup S_t$, according to the 2-level live-edge graph $\bm L^2(dom(\hat{\Psi}_{\pi})\cup S_t)$.

In what follows we give some technical results which are based on the above definitions and will be used in Subsection \ref{proofthm1} to show the main theorem. In particular, we show that the 2-level diffusion model satisfies certain properties connected with submodular set functions (see Subsection \ref{sub1}); then, we use such properties to relate the expected influence spread in the ordinary diffusion model to that of the 2-level diffusion model (see Subsection \ref{sub2}); finally, by using a similar approach as in Section \ref{sec_example}, we use the above relations to relate the efficiency of the randomized non-adaptive policy with that of the optimal adaptive policy, via the 2-level hybrid adaptive policy (see Subsection \ref{sub3}). 

\subsection{Adaptive Submodularity in the 2-Level Diffusion Model}\label{sub1} The adaptive submodularity \cite{Golovin2011a} is a property that extends  the well-known concept of submodularity to the adaptive framework, and allows us to design efficient adaptive approximation algorithms. In particular, the adaptive submodularity states that, given two subrealisations $\psi\subseteq \psi'$ and a node $v\in V$, adding node $v$ under partial realisation $\psi'$ causes an expected increment of the influence spread that is not higher than that caused under partial realisation $\psi$. Unfortunately, as shown in  the arXiv version of \cite{Golovin2011a}, the myopic feedback model, in general, does not satisfy the adaptive submodularity. 

Anyway, by resorting to the 2-level diffusion model, we can recover a similar property as the adaptive submodularity. Given $S\subseteq V$, a partial realisation $\hat{\psi}$, and $v\in V$, let 
\begin{multline*}
{\Delta}^2_{S}(v|\hat{\psi}):=\nonumber\\
\E_{\bm L,\hat{\bm L}}\left[\sigma_{\bm L^2(\{v\}\cup dom(\hat{\psi})\setminus S)}(\{v\}\cup S\cup dom(\hat{\psi}))-\sigma_{\bm L^2(dom(\hat{\psi})\setminus S)}(S\cup dom(\hat{\psi}))|\hat{\psi}\subseteq \hat{\Phi}\right]
\end{multline*}
denote the expected increment of the influence spread w.r.t. the $2$-level diffusion model when adding seed $v$ to the set of nodes $S\cup dom(\hat{\psi})$, but assuming that the nodes in $S$ have a unique chance to influence their neighbors, and that partial realisation $\hat{\psi}$ has been observed. We say that the 2-level diffusion model is {\em adaptive submodular} if, for any $S\subseteq V$, any partial realisations $\hat{\psi},\hat{\psi}'$ with $\hat{\psi}\subseteq \hat{\psi}'$, and any $v\in V$, we have that $\Delta^2_S(v|\hat{\psi})\geq {\Delta}^2_S(v|\hat{\psi}')$.
\begin{lemma}\label{cla3}
The 2-level diffusion model is adaptive submodular. 
\end{lemma}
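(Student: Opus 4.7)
The plan is to reduce the claim to the classical pointwise submodularity of IC influence spread on a fixed live-edge graph, via a coupling that embeds the 2-level diffusion into an ordinary diffusion on an enlarged vertex set. Concretely, I would introduce a shadow copy $v^*$ for each $v\in V$ and build an augmented live-edge graph $\tilde L$ on $V\cup V^*$ whose edges are: $(v,w)\in \tilde L$ iff $(v,w)\in \bm L$, and $(v^*,w)\in \tilde L$ iff $(v,w)\in \hat{\bm L}$; shadow nodes have no incoming edges. For any fixed realization of $(\bm L,\hat{\bm L})$, running the ordinary IC diffusion in $\tilde L$ from the seed set $A\cup T^*$ reaches the same set of $V$-nodes as the 2-level diffusion from seed set $A$ with 2-chance subset $T$: a shadow $v^*$ fires precisely when $v\in T$, reproducing the role of $\hat{\bm L}$-edges out of $T$ in $\bm L^2(T)$.

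Next, I would invoke the standard fact that influence spread on a fixed graph is submodular in the seed set. For every realization $(\bm L,\hat{\bm L})$, applying this submodularity in $\tilde L$ with $Z_1 = (S\cup U)\cup (U\setminus S)^* \subseteq (S\cup U')\cup (U'\setminus S)^* = Z_2$ (inclusion follows from $U\subseteq U'$) and $W=\{v,v^*\}$, then translating back through the coupling, yields the pointwise inequality
$$\sigma_{\bm L^2(T\cup\{v\})}(A\cup\{v\}) - \sigma_{\bm L^2(T)}(A) \geq \sigma_{\bm L^2(T'\cup\{v\})}(A'\cup\{v\}) - \sigma_{\bm L^2(T')}(A'),$$
where $T=U\setminus S$, $T'=U'\setminus S$, $A=S\cup U$, $A'=S\cup U'$, and $U=\mathrm{dom}(\hat\psi)\subseteq U'=\mathrm{dom}(\hat\psi')$.

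Then I would take expectations conditioned on $\hat\psi'\subseteq\hat\Phi$ on both sides: the right-hand side is by definition $\Delta^2_S(v|\hat\psi')$. The left-hand quantity depends only on $\bm L$ and on the $\hat{\bm L}$-edges out of $(U\setminus S)\cup\{v\}\subseteq U\cup\{v\}$. Both events $\hat\psi\subseteq\hat\Phi$ and $\hat\psi'\subseteq\hat\Phi$ fix these edges to identical values (since $\hat\psi\subseteq\hat\psi'$ agree on $U$) and leave $\hat{\bm L}|_{\{v\}}$ and $\bm L$ at their prior product distributions; the additional edges revealed by $\hat\psi'$ are indexed by sources in $U'\setminus U$, which is disjoint from $U\cup\{v\}$, so they are irrelevant to the left-hand side. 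Hence the conditional expectations of the left-hand side under $\hat\psi$ and under $\hat\psi'$ coincide, and chaining yields $\Delta^2_S(v|\hat\psi)\geq \Delta^2_S(v|\hat\psi')$ as required.

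The hard part will be the careful bookkeeping in the coupling: confirming the pointwise equivalence between the 2-level cascade and the ordinary cascade on $\tilde L$ for every realization, and tracking exactly which $\hat{\bm L}$-edges each partial realization reveals. Degenerate cases such as $v\in U'$ or $v\in S$ can be handled by adjusting $W$ in the augmented graph to $\{v^*\}$ or $\emptyset$ as appropriate (collapsing $W$ on the side where $v$ is already in $A$, $A'$, or $T^*$, $T'^*$); the submodularity step on the augmented graph then goes through unchanged and gives the same conclusion.
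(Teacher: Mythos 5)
Your proposal is correct, and it rests on the same key fact as the paper's proof---the classical pointwise submodularity of $\sigma_L$ for a fixed live-edge graph $L$---but the execution is genuinely different. The paper conditions on $\hat{\psi}\subseteq \hat{\Phi}$ \emph{first}: once the realisation is observed, the second chances of the nodes in $dom(\hat{\psi})\setminus S$ can be absorbed into the seed set as the revealed images $Im(\hat{\psi}_{\setminus S})$, so that $\Delta^2_S(v|\hat{\psi})$ is rewritten as $\E_{\hat{\Phi}(v)}\left[\E_{\bm L}[\sigma_{\bm L}(\hat{\Phi}(v)\cup S\cup Im(\hat{\psi}_{\setminus S}))-\sigma_{\bm L}(S\cup Im(\hat{\psi}_{\setminus S}))]\right]$, and submodularity of $\sigma_{\bm L}$ is then applied, only over the remaining randomness of $\bm L$ and $\hat{\Phi}(v)$, with nested sets $S\cup Im(\hat{\psi}_{\setminus S})\subseteq S\cup Im(\hat{\psi}'_{\setminus S})$ and added set $Z=\hat{\Phi}(v)$. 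You instead keep $\hat{\bm L}$ random throughout: the shadow-copy graph encodes the two chances as ordinary diffusion, submodularity is applied pointwise for \emph{every} joint realization of $(\bm L,\hat{\bm L})$, and the inequality is transferred across the two conditionings by a conditional-independence argument. Your route buys an exact realization-by-realization coupling (note the spread on the augmented graph equals the $2$-level spread plus the number of shadow seeds, but these offsets cancel in the differences) and isolates all probabilistic content in one clean step: the law of $\bm L$ together with the $\hat{\bm L}$-edges leaving $dom(\hat{\psi})\cup\{v\}$ is identical under either conditioning, because $\hat{\psi}$ and $\hat{\psi}'$ agree on $dom(\hat{\psi})$ and $v\notin dom(\hat{\psi}')$. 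The paper's route buys brevity and needs no auxiliary construction, at the price of doing the same independence bookkeeping implicitly inside the conditional rewriting. One remark: both arguments in their main line assume $v\notin dom(\hat{\psi}')$ (the paper does so implicitly when it identifies the final expression, taken with $\hat{\Phi}(v)$ at its prior law, with $\Delta^2_S(v|\hat{\psi}')$); this is the only case needed where the lemma is invoked, and, as you observe, the remaining cases are degenerate---e.g., if $v\in dom(\hat{\psi}')\setminus S$ then $\Delta^2_S(v|\hat{\psi}')=0$ and monotonicity of the spread alone gives the claim.
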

\subsection{From the Ordinary to the 2-level Diffusion Model}\label{sub2}
In this subsection, we see how to relate the 2-level diffusion model to the ordinary one.
\begin{lemma}\label{cla1}
We have that $\sigma^2(S)\leq 2\cdot \sigma(S)$ for any $S\subseteq V$.

\end{lemma}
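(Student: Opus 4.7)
\textbf{Proof proposal for Lemma \ref{cla1}.}

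The plan is to build a coupling that writes $\bm L^2(S)$ as the union of two live-edge graphs, each separately distributed like $\bm L$, and then show the reachable set from $S$ in the union splits cleanly into the union of the two individual reachable sets; a union bound then yields the factor of $2$.

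First I would make the randomness explicit: let $\{X_e,Y_e\}_{e\in E}$ be independent Bernoullis with $\mathbb{P}[X_e=1]=\mathbb{P}[Y_e=1]=p_e$, and set $\bm L:=\{e:X_e=1\}$, $\hat{\bm L}:=\{e:Y_e=1\}$. Partition the edges as $E_S:=\{(u,v)\in E:u\in S\}$ and $E_{\bar S}:=E\setminus E_S$, so by definition $\bm L^2(S)=\bm L\cup(\hat{\bm L}\cap E_S)$. The key construction is to \emph{swap} only the $E_S$-edges: define
\[
\bm L_1:=\bm L,\qquad \bm L_2:=(\hat{\bm L}\cap E_S)\cup(\bm L\cap E_{\bar S}).
\]
Two checks are routine: (i) $\bm L_1\cup\bm L_2=\bm L\cup(\hat{\bm L}\cap E_S)=\bm L^2(S)$; and (ii) $\bm L_2$ has the same distribution as $\bm L$, because each edge $e$ is independently placed in $\bm L_2$ with probability $p_e$ (using $Y_e$ for $e\in E_S$ and $X_e$ for $e\in E_{\bar S}$).

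The heart of the argument is the reachability identity $R_{\bm L^2(S)}(S)=R_{\bm L_1}(S)\cup R_{\bm L_2}(S)$. The inclusion $\supseteq$ is immediate. For $\subseteq$, pick any $v\in R_{\bm L^2(S)}(S)$ with a witnessing path $s=u_0\to u_1\to\cdots\to u_\ell=v$ in $\bm L^2(S)$, and let $j^*:=\max\{j:u_j\in S\}$, which exists since $u_0\in S$. On the suffix $u_{j^*}\to u_{j^*+1}\to\cdots\to u_\ell$, every edge except the first has tail outside $S$, hence lies in $E_{\bar S}$; but $\bm L_1$ and $\bm L_2$ coincide on $E_{\bar S}$ (both equal $\bm L\cap E_{\bar S}$), so those suffix edges lie in \emph{both} $\bm L_1$ and $\bm L_2$. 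The first suffix edge $(u_{j^*},u_{j^*+1})\in E_S$ belongs to $\bm L^2(S)=\bm L_1\cup\bm L_2$, so it lies in at least one $\bm L_i$; the entire suffix is then contained in that $\bm L_i$, giving $v\in R_{\bm L_i}(S)$. Once this identity is established, a pointwise union bound $|R_{\bm L_1}(S)\cup R_{\bm L_2}(S)|\leq|R_{\bm L_1}(S)|+|R_{\bm L_2}(S)|$, combined with $\bm L_1\stackrel{d}{=}\bm L_2\stackrel{d}{=}\bm L$, yields $\sigma^2(S)\leq 2\sigma(S)$ after taking expectation.

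The main obstacle is exactly the decomposition step: in general $R_{A\cup B}(S)\neq R_A(S)\cup R_B(S)$, since a witnessing path may alternate between $A$-only and $B$-only edges. What rescues the argument here is the very specific structure $\bm L_1\cap E_{\bar S}=\bm L_2\cap E_{\bar S}$: the ``last seed on the path'' trick reroutes the witness so that it crosses at most one $E_S$-edge, after which everything lies in both graphs simultaneously. This is why the swap was performed only on $E_S$ and not everywhere.
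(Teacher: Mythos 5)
Your proposal is correct and follows essentially the same route as the paper: the paper's proof uses exactly your swapped graph $(\bm L\cap E_{\bar S})\cup(\hat{\bm L}\cap E_S)$, asserts the pointwise bound $\sigma_{\bm L^2(S)}(S)\leq \sigma_{\bm L}(S)+\sigma_{\bm L_2}(S)$, and concludes by the distributional identity $\bm L_2\stackrel{d}{=}\bm L$. Your ``last seed on the path'' argument simply supplies a detailed justification for the pointwise inequality that the paper states without proof, so the two proofs coincide in substance.
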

Now, given a set $S\subseteq V$ and $v\in V$, let $\Delta_S(v):=\sigma(S\cup\{v\})-\sigma(S)$, that is the expected increment under the ordinary diffusion model when adding a new node to a set $S$, and let $\Delta^2_S(v):=\E_{\bm L,\hat{\bm L}}[\sigma_{\bm L^2(\{v\})}(S\cup\{v\})-\sigma_{\bm L}(S)]$, that is the above expected increment, with the further assumption that $v$ has two chances to influence its neighbors.
\begin{lemma}\label{cla.1}
We have that $\Delta^2_S(v)\leq 2\cdot \Delta_S(v)$ for any $S\subseteq V$ and $v\in V$. 
\end{lemma}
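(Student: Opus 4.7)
The plan is to couple the two-level live-edge graph $\bm L^2(\{v\})$ with two dependent copies of $\bm L$ and then apply subadditivity. Concretely, I would set $G_A:=\bm L$ and let $G_B$ be the live-edge graph obtained from $\bm L$ by replacing its $v$-source edges by the corresponding $v$-source edges of $\hat{\bm L}$. Because the $v$-source and non-$v$-source parts of a live-edge graph are drawn independently with the same probabilities, and $\hat{\bm L}$ has the same distribution as $\bm L$, the marginal law of $G_B$ coincides with that of $\bm L$; in particular $\E[\sigma_{G_B}(T)]=\sigma(T)$ for every $T\subseteq V$. By construction $\bm L^2(\{v\})=G_A\cup G_B$ as edge sets (they agree on all non-$v$-source edges, while the $v$-source edges of $\bm L^2(\{v\})$ are exactly those of $\bm L$ union $\hat{\bm L}$).

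The structural step is the set identity
\[
R_{\bm L^2(\{v\})}(S\cup\{v\}) \;=\; R_{G_A}(S\cup\{v\})\;\cup\;R_{G_B}(S\cup\{v\}),
\]
which I would prove by a short path-analysis. The key observation is that every simple path starting at a vertex of $S\cup\{v\}$ contains at most one $v$-source edge, namely the edge exiting $v$ in case the path visits $v$ at all; all other edges have source $\neq v$ and are therefore identical in $G_A$ and in $G_B$. Consequently any simple path in $G_A\cup G_B$ starting from $S\cup\{v\}$ is contained entirely in $G_A$ or entirely in $G_B$, yielding the identity.

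Combining this identity with inclusion–exclusion and taking expectations gives
\[
\E[\sigma_{\bm L^2(\{v\})}(S\cup\{v\})] \;=\; 2\sigma(S\cup\{v\}) - \E[|R_{G_A}(S\cup\{v\})\cap R_{G_B}(S\cup\{v\})|],
\]
so after subtracting $\sigma(S)$ from both sides the lemma reduces to the intersection estimate
\[
\E[|R_{G_A}(S\cup\{v\})\cap R_{G_B}(S\cup\{v\})|] \;\geq\; \sigma(S).
\]
This last inequality is the main obstacle. The heuristic is that any node reached from $S$ in $\bm L$ via a path that does not pass through $v$ as an intermediate vertex is automatically reached from $S\cup\{v\}$ in both $G_A$ and $G_B$, since such a path uses only non-$v$-source edges, which are common to the two copies. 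Nodes whose activation from $S$ in $\bm L$ is mediated by $v$ must be handled by treating $v$ directly as a seed (which is legitimate because $v\in S\cup\{v\}$) and continuing along the non-$v$-source edges shared by $G_A$ and $G_B$; the delicate part is that the first edge exiting $v$ in $\bm L$ need not belong to $G_B$, so the pointwise inclusion $R_{G_A}(S)\subseteq R_{G_B}(S\cup\{v\})$ can fail and the matching must be carried out in expectation via a careful coupling of the three random activations. Once this intersection lower bound is in place, rearranging yields $\Delta^2_S(v)\leq 2\Delta_S(v)$ as claimed.
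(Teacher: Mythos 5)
Your coupling $G_A:=\bm L$, $G_B:=(\bm L$ with its $v$-source edges replaced by those of $\hat{\bm L})$, the union identity $R_{\bm L^2(\{v\})}(S\cup\{v\})=R_{G_A}(S\cup\{v\})\cup R_{G_B}(S\cup\{v\})$ (every simple path uses at most one $v$-source edge), and the inclusion--exclusion step are all correct; together they show the lemma is \emph{equivalent} to the intersection estimate $\E[|R_{G_A}(S\cup\{v\})\cap R_{G_B}(S\cup\{v\})|]\geq \sigma(S)$. But that estimate is exactly where your proposal stops: you name it ``the main obstacle,'' note that the natural pointwise inclusion fails, and defer to an unspecified ``careful coupling.'' This is a genuine gap, and it cannot be filled, because the estimate is false. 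Take $V=\{s,v,w\}$, $S=\{s\}$, with edges $(s,v)$ of probability $1$ and $(v,w)$ of probability $p\in(0,1)$. Then $w\in R_{G_A}(S\cup\{v\})$ iff $(v,w)\in\bm L$, while $w\in R_{G_B}(S\cup\{v\})$ iff $(v,w)\in\hat{\bm L}$, so
\begin{equation*}
\E\bigl[|R_{G_A}(S\cup\{v\})\cap R_{G_B}(S\cup\{v\})|\bigr]=2+p^2\;<\;2+p=\sigma(S):
\end{equation*}
nodes that $S$ reaches only \emph{through} $v$ lie in the intersection only when both copies of $v$'s out-edges succeed, which is precisely the loss your heuristic glosses over. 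Moreover, since your reduction is an equivalence, the same example shows that the raw difference $\E[\sigma_{\bm L^2(\{v\})}(S\cup\{v\})]-\sigma(S)$ equals $p-p^2>0=2\Delta_S(v)$ here, so no cleverer coupling can rescue this formulation of the comparison.

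The paper's proof avoids the intersection estimate altogether by working with residual graphs, and this is the idea you are missing. It expresses both marginal gains as expected spreads of the \emph{single} seed $v$ in the residual live-edge graph $L_{\setminus S}$, obtained by deleting every node activated by $S$ together with all incident edges (in particular the marginal is $0$ whenever $S$ already reaches $v$), and only then applies the two-independent-copies subadditivity argument of Lemma~\ref{cla1} to the one-seed spread in that residual graph; there the argument is clean because, conditioned on the deleted set, the remaining edges of $\bm L$ and $\hat{\bm L}$ are still independent with their original probabilities, so the two copies of $v$'s out-edges are exchangeable. In the residual formulation, the benefit that $S$ itself extracts from $v$'s second-chance edges is excised from $v$'s marginal --- exactly the term that breaks your intersection bound. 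So the missing ingredient is not a sharper coupling on top of your decomposition, but the reinterpretation of $\Delta_S(v)$ and $\Delta^2_S(v)$ as residual one-seed spreads before any doubling argument is attempted.
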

\subsection{From the Adaptive to the Randomized Non-adaptive Policy}\label{sub3}
The following three lemmas (Lemma \ref{lem2}, Lemma \ref{lem1}, and \ref{cla0}) relate the randomized non-adaptive policy with the 2-level hybrid adaptive policy, and then with the optimal adaptive policy of the ordinary diffusion model. In particular, the proofs of Lemma \ref{lem2}, Lemma \ref{lem1}, and \ref{cla0}, resort to a similar approach of the proofs of Lemma \ref{lemprel1}, Lemma \ref{lemprel2}, and Lemma \ref{lemprel3} of Section \ref{sec_example}.

\begin{lemma}\label{lem2}
For any $S\subseteq V$, we have that $k \cdot (\E_{\bm \rho}[\sigma(\{\bm \rho\}\cup S)]-\sigma(S))\geq \sum_{v\in V\setminus S}x_v\cdot \Delta_S(v).$
\end{lemma}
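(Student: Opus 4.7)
\textbf{Proof plan for Lemma~\ref{lem2}.}

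The statement is essentially a direct computation, so the plan is to simply expand the expectation on the left-hand side according to the definition of $\bm\rho$, and then relate the resulting sum to $\sum_{v\in V\setminus S}x_v\cdot\Delta_S(v)$ by using the key identity $\sum_{v\in V}x_v=k$, which holds because the optimal adaptive policy $\pi$ satisfies $|\pi|=k$ (i.e., it always selects exactly $k$ seeds). Concretely, by linearity of expectation applied to the indicator variables $\mathbb{1}[v\in dom(\Psi_\pi)]$, we have $\sum_{v\in V}x_v=\E_{\bm L}[|dom(\Psi_\pi)|]=k$, so that the distribution $\mathbb{P}[\bm\rho=v]=x_v/k$ is indeed a well-defined probability distribution on $V$.

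Given this, the first step is to write
\[
\E_{\bm\rho}[\sigma(\{\bm\rho\}\cup S)]=\sum_{v\in V}\frac{x_v}{k}\cdot\sigma(\{v\}\cup S).
\]
Second, I would use $\sum_{v\in V}x_v/k=1$ to rewrite $\sigma(S)=\sum_{v\in V}(x_v/k)\cdot\sigma(S)$, so that after multiplying through by $k$,
\[
k\cdot\bigl(\E_{\bm\rho}[\sigma(\{\bm\rho\}\cup S)]-\sigma(S)\bigr)=\sum_{v\in V}x_v\cdot\bigl(\sigma(\{v\}\cup S)-\sigma(S)\bigr).
\]
Third, I would split the sum over $V$ into the contributions coming from $v\in S$ and from $v\in V\setminus S$. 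For $v\in S$, one has $\{v\}\cup S=S$, hence $\sigma(\{v\}\cup S)-\sigma(S)=0$ and these terms vanish. For $v\in V\setminus S$, the summand is exactly $x_v\cdot\Delta_S(v)$ by the definition of $\Delta_S(v)$. Dropping the (zero) terms with $v\in S$ therefore yields the desired inequality, which in fact is an equality.

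There is no real obstacle here: the argument is a one-line computation once the identity $\sum_{v\in V}x_v=k$ is observed. The only point that deserves explicit mention is that the $\geq$ in the statement is actually an equality, and that the well-definedness of the randomized policy ${\sf Rand}_t$ relies on the fact that $\pi$ always exhausts its budget, so the marginals $x_v$ sum to $k$.
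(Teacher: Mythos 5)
Your proposal is correct and follows essentially the same route as the paper's proof: both expand the expectation over $\bm\rho$ using $\mathbb{P}[\bm\rho=v]=x_v/k$, observe that the terms with $v\in S$ vanish since $\Delta_S(v)=0$ there, and obtain the claimed bound as an equality. Your write-up merely makes explicit two points the paper treats implicitly (rewriting $\sigma(S)$ via $\sum_{v}x_v/k=1$, and the justification $\sum_v x_v=k$ from $|\pi|=k$, which the paper records when defining ${\sf Rand}_t$), so there is no substantive difference.
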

The following lemma relates the adaptive setting with the non-adaptive one, and its proof resorts to the adaptive submodularity defined in Subsection \ref{sub1}. 
\begin{lemma}\label{lem1}
We have $\E_{\bm L,\hat{\bm L}}[\sigma^2_{\bm L,\hat{\bm L}}(dom(\hat{\Psi}_{\pi})\cup S)]\leq \sigma^2(S)+\sum_{v\in V\setminus S}x_v\cdot \Delta^2_S(v)$ for any $S\subseteq V$. 
\end{lemma}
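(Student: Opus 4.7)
The plan is to split the desired inequality into two complementary estimates: a telescoping argument based on the adaptive submodularity of the 2-level diffusion model (Lemma~\ref{cla3}), together with a pointwise set-inclusion that reconciles the asymmetric treatment of $S$ in $\sigma^2(S)$ (where $S$ receives a second chance) and in $\Delta^2_S$ (where $S$ does not).

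Let $v_1,\dots,v_k$ be the seeds chosen by $\pi$ while observing $\hat{\bm L}$, and let $\hat\psi_j$ denote the partial realisation seen just before $v_j$ is selected (so $\hat\psi_1=\emptyset$). I would write the telescope
\[
\sigma_{\bm L^2(T\setminus S)}(S\cup T)-\sigma_{\bm L}(S)=\sum_{j=1}^{k}\Bigl(\sigma_{\bm L^2(\{v_1,\ldots,v_j\}\setminus S)}(S\cup\{v_1,\ldots,v_j\})-\sigma_{\bm L^2(\{v_1,\ldots,v_{j-1}\}\setminus S)}(S\cup\{v_1,\ldots,v_{j-1}\})\Bigr),
\]
recognise the conditional expectation of the $j$-th summand given $\hat\psi_j$ as precisely $\Delta^2_S(v_j\mid\hat\psi_j)$, and invoke adaptive submodularity (Lemma~\ref{cla3}) to get $\Delta^2_S(v_j\mid\hat\psi_j)\le\Delta^2_S(v_j\mid\emptyset)=\Delta^2_S(v_j)$. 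Since $\sum_{j=1}^k\Pr[v_j=v]=x_v$ for each $v$, taking expectations delivers
\[
\E\bigl[\sigma_{\bm L^2(T\setminus S)}(S\cup T)\bigr]\;\le\;\sigma(S)+\sum_{v\in V\setminus S}x_v\,\Delta^2_S(v). \tag{$\star$}
\]

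Next I would establish the pointwise bound $\sigma_{\bm L^2(T\cup S)}(T\cup S)-\sigma_{\bm L^2(T\setminus S)}(T\cup S)\le\sigma_{\bm L^2(S)}(S)-\sigma_{\bm L}(S)$. Setting $E=\bm L$, $E_S=\hat{\bm L}\cap\{(u,v)\in E: u\in S\}$ and $E_T=\hat{\bm L}\cap\{(u,v)\in E: u\in T\setminus S\}$, this reduces to the set inclusion $R_{E\cup E_T\cup E_S}(T\cup S)\setminus R_{E\cup E_T}(T\cup S)\subseteq R_{E\cup E_S}(S)\setminus R_E(S)$. Given $w$ in the left-hand side, I would take any path in $E\cup E_T\cup E_S$ from $T\cup S$ to $w$ and let $u_j$ be its last vertex lying in $S$ (such a vertex must exist, otherwise the path uses no $E_S$-edge and forces $w\in R_{E\cup E_T}(T\cup S)$). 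The sub-path from $u_j$ to $w$ cannot use any $E_T$-edge, because any such edge would have source in $T\setminus S$, producing a path in $E\cup E_T$ from a seed of $T$ to $w$ and contradicting $w\notin R_{E\cup E_T}(T\cup S)$. Hence the sub-path lies entirely in $E\cup E_S$, certifying $w\in R_{E\cup E_S}(S)\setminus R_E(S)$. Combining $(\star)$ with the expectation of this pointwise bound then yields the claim. The main obstacle is this pointwise set-inclusion step: one has to carefully analyse how a single path can interleave the three edge categories $E$, $E_S$ and $E_T$; everything else is a routine application of adaptive submodularity and linearity of expectation.
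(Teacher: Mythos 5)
Your proof is correct, and it follows the same overall strategy as the paper's: decompose the spread of the hybrid policy into a base term plus per-seed increments along the run of $\pi$, identify each conditional increment with a quantity of the form $\Delta^2_S(v\,|\,\hat\psi)$, drop the conditioning via adaptive submodularity (Lemma~\ref{cla3}), and convert the sum over histories into $\sum_{v\in V\setminus S}x_v\cdot\Delta^2_S(v)$ using $\sum_{\hat{\psi}'}\chi(v|\hat{\psi}')=x_v$. Where you genuinely differ is in the decomposition itself. The paper writes the left-hand side as $\sigma^2(S)$ plus the increments of the full 2-level process, in which $S$ also enjoys its second chance, and then asserts (``we observe that the second sum is at most\dots'') that these increments are dominated by $\Delta^2_S(v|\hat{\psi}')$, whose definition gives $S$ only one chance; that domination is itself a submodularity-type fact about adding the extra edges $\hat{\bm L}\cap\{(u,v)\in E: u\in S\}$, and the paper leaves it implicit. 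You instead telescope the process in which $S$ has one chance throughout, so that each conditional increment is \emph{exactly} $\Delta^2_S(v_j|\hat\psi_j)$ with no inequality needed, and you account for $S$'s second chance separately through the pointwise inclusion $R_{E\cup E_T\cup E_S}(T\cup S)\setminus R_{E\cup E_T}(T\cup S)\subseteq R_{E\cup E_S}(S)\setminus R_E(S)$, which yields the correction term $\sigma^2(S)-\sigma(S)$ in expectation; your last-vertex-in-$S$ argument for this inclusion is sound (the sub-path after that vertex can contain no $E_S$-edge because its sources avoid $S$, and no $E_T\setminus E$ edge on pain of certifying $w\in R_{E\cup E_T}(T\cup S)$). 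The net effect is the same inequality: your route costs an extra combinatorial lemma but makes fully rigorous precisely the step the paper states as an unproved observation, whereas the paper's route is shorter at the price of that gap.
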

The following lemma shows that the optimal adaptive influence spread is upper bounded by that expected influence spread of the 2-level hybrid adaptive policy. 
\begin{lemma}\label{cla0}
We have $OPT_A(G,k)\leq \E_{\bm L,\hat{\bm L}}[\sigma^2_{\bm L,\hat{\bm L}}(dom(\hat{\Psi}_{\pi})\cup S)]$ for any $S\subseteq V$. 
\end{lemma}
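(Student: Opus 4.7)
\noindent\textbf{Proof plan for Lemma~\ref{cla0}.} My plan is to interpose a single intermediate quantity and split the lemma into two inequalities, proving
\[
OPT_A(G,k) \ \leq\ \E_{\hat{\bm L}}\bigl[\sigma(Im(\hat{\Psi}_\pi))\bigr]
\ \leq\ \E_{\bm L,\hat{\bm L}}\bigl[\sigma^2_{\bm L,\hat{\bm L}}(dom(\hat{\Psi}_\pi)\cup S)\bigr],
\]
where the bridge is the ordinary one-level expected spread from the \emph{observed image} $Im(\hat{\Psi}_\pi)=dom(\hat{\Psi}_\pi)\cup\bigl\{z:\exists v\in dom(\hat{\Psi}_\pi),(v,z)\in \hat{\bm L}\bigr\}$ of the partial realisation that $\pi$ returns when run on the independent copy $\hat{\bm L}$.

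\smallskip
\noindent\emph{Second (easy) inequality.} Setting $T:=dom(\hat{\Psi}_\pi)\cup S$ and unfolding $\bm L^2(T)=\bm L\cup\bigl(\hat{\bm L}\cap\{(u,v):u\in T\}\bigr)$, the only edges of $\bm L^2(T)$ not present in $\bm L$ leave $T$ and point to the $\hat{\bm L}$-neighbourhood $N_{\hat{\bm L}}(T)$; a short case analysis on paths (replacing every $\hat{\bm L}$-edge in a path emanating from $T$ by a fresh start at its endpoint in $N_{\hat{\bm L}}(T)$) gives the pointwise identity $R_{\bm L^2(T)}(T)=R_{\bm L}(T\cup N_{\hat{\bm L}}(T))$. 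Since $T\cup N_{\hat{\bm L}}(T)\supseteq Im(\hat{\Psi}_\pi)$, monotonicity of reach yields $\sigma^2_{\bm L,\hat{\bm L}}(T)\geq \sigma_{\bm L}(Im(\hat{\Psi}_\pi))$; averaging first over $\bm L$ (independent of $\hat{\bm L}$) collapses the inner expectation into $\sigma(Im(\hat{\Psi}_\pi))$, and then I average over $\hat{\bm L}$.

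\smallskip
\noindent\emph{First (hard) inequality -- the short-circuit step.} Using $\bm L\stackrel{d}{=}\hat{\bm L}$ I rewrite $OPT_A=\E_{\hat{\bm L}}[\sigma_{\hat{\bm L}}(dom(\hat{\Psi}_\pi))]$ and split $\hat{\bm L}=\hat{\bm L}^{(1)}\cup \hat{\bm L}^{(2)}$ according to whether an edge's source lies in $dom(\hat{\Psi}_\pi)$ or not. The crucial observation is a \emph{short-circuit identity}: every $\hat{\bm L}^{(1)}$-edge leaves $dom(\hat{\Psi}_\pi)$ and, by the very definition of $Im$, lands inside $Im(\hat{\Psi}_\pi)$; hence any $\hat{\bm L}$-path emanating from $Im(\hat{\Psi}_\pi)$ that enters $dom(\hat{\Psi}_\pi)$ can be truncated to restart at its next vertex, which is already in $Im(\hat{\Psi}_\pi)$. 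Iterating gives the pointwise equality $R_{\hat{\bm L}}(Im(\hat{\Psi}_\pi))=R_{\hat{\bm L}^{(2)}}(Im(\hat{\Psi}_\pi))$; combined with the trivial $R_{\hat{\bm L}}(dom(\hat{\Psi}_\pi))=R_{\hat{\bm L}}(Im(\hat{\Psi}_\pi))$ (since $Im$ is the one-hop $\hat{\bm L}$-reach of $dom$) I obtain the deterministic identity $\sigma_{\hat{\bm L}}(dom(\hat{\Psi}_\pi))=|R_{\hat{\bm L}^{(2)}}(Im(\hat{\Psi}_\pi))|$.

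\smallskip
\noindent\emph{Closing by coupling, and main obstacle.} Because $\pi$ only observes $\hat{\bm L}$-edges leaving previously selected seeds, unrolling the policy step by step shows that $dom(\hat{\Psi}_\pi)$ (hence also $Im(\hat{\Psi}_\pi)$) is measurable with respect to $\hat{\bm L}^{(1)}$; by independence of the Bernoulli edges, conditional on $\hat{\bm L}^{(1)}$ the set $\hat{\bm L}^{(2)}$ is distributed as the restriction of a fresh IC live-edge graph $\tilde{\bm L}$ to edges sourced outside $dom(\hat{\Psi}_\pi)$. Coupling so that $\hat{\bm L}^{(2)}\subseteq \tilde{\bm L}$ pointwise gives $|R_{\hat{\bm L}^{(2)}}(Im(\hat{\Psi}_\pi))|\leq |R_{\tilde{\bm L}}(Im(\hat{\Psi}_\pi))|$, so taking expectation over $\hat{\bm L}^{(2)}$ (equivalently $\tilde{\bm L}$) upper-bounds the conditional expectation by $\sigma(Im(\hat{\Psi}_\pi))$, and a final expectation over $\hat{\bm L}^{(1)}$ closes the chain. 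The main obstacle I expect is making the short-circuit identity fully rigorous -- verifying that every excursion from $Im(\hat{\Psi}_\pi)$ back into $dom(\hat{\Psi}_\pi)$ through $\hat{\bm L}^{(2)}$-edges creates no new reachable vertex, since the only $\hat{\bm L}^{(1)}$-edges leaving $dom(\hat{\Psi}_\pi)$ lead right back into $Im(\hat{\Psi}_\pi)$ -- together with the measurability claim, which has to be developed carefully despite the apparent circularity in the very definition of the decomposition $\hat{\bm L}=\hat{\bm L}^{(1)}\cup \hat{\bm L}^{(2)}$.
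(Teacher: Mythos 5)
Your proof is correct, and it takes a genuinely more explicit route than the paper's, which is a one-liner: the paper simply asserts the identity $OPT_A(G,k)=\E_{\bm L,\hat{\bm L}}[\sigma^2_{\bm L,\hat{\bm L}}(dom(\hat{\Psi}_{\pi}))]$ and then concludes by monotonicity of the spread in the seed set, $\E_{\bm L,\hat{\bm L}}[\sigma^2_{\bm L,\hat{\bm L}}(dom(\hat{\Psi}_{\pi}))]\leq \E_{\bm L,\hat{\bm L}}[\sigma^2_{\bm L,\hat{\bm L}}(dom(\hat{\Psi}_{\pi})\cup S)]$, whereas you interpose the ordinary one-level spread $\E_{\hat{\bm L}}[\sigma(Im(\hat{\Psi}_{\pi}))]$ and prove both halves of the chain (via the short-circuit identity, the measurability of the policy's trajectory with respect to edges sourced at selected seeds, and a coupling with a fresh live-edge graph; all three steps, and the reachability identity $R_{\bm L^2(T)}(T)=R_{\bm L}(T\cup N_{\hat{\bm L}}(T))$, are sound). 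Two points of comparison are worth making. First, the paper's claimed equality is in fact only an inequality ``$\leq$'': conditional on $\hat{\Psi}_{\pi}=\psi$, the 2-level graph $\bm L^2(dom(\psi))$ contains not only a subgraph distributed as $\hat{\bm L}$ given $\psi$ (the observed edges plus fresh edges sourced outside $dom(\psi)$), but also the fresh $\bm L$-edges leaving $dom(\psi)$ --- a genuine second chance for the seeds; on a single edge $(u,v)$ with $p_{uv}=\frac{1}{2}$ and $k=1$ the two sides are $\frac{3}{2}$ and $\frac{7}{4}$. Since the discrepancy goes in the direction needed, the lemma is unharmed, but justifying even that ``$\leq$'' requires precisely the conditional-freshness argument you develop: under myopic feedback the event $\{\hat{\Psi}_{\pi}=\psi\}$ depends only on $\hat{\bm L}$-edges leaving $dom(\psi)$, so the unobserved edges may be exchanged for independent copies. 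Second, your version makes explicit the one place where the myopic assumption enters (under full-adoption feedback the conditioning would pin down edges far from the seeds and the coupling would fail), at the cost of extra bookkeeping with $Im(\hat{\Psi}_{\pi})$; the paper's version is shorter but hides this, and as written overstates an inequality as an identity.
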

\subsection{Proof of Theorem \ref{thm1}}\label{proofthm1}
Armed with the above results, we can now prove Theorem~\ref{thm1}. For any $t\in [k]_0$, let $GR_N(G,t):=\sigma(S_t)$ denote the expected influence spread $\sigma(S_t)$ obtained when the first $t$ seeds have been selected by the greedy algorithm. We have that 
\begin{align}
&GR_N(G,t+1)-GR_N(G,t)\nonumber\\
&=\max_{v\in V}\left[\sigma(\{v\}\cup S_{t})\right]-\sigma(S_{t})\nonumber\\
&\geq \overbrace{\E_{\bm \rho}[\sigma(\{\bm \rho\}\cup S_{t})]}^{\text{Exp. value of }{\sf Rand}_t}-\sigma(S_{t})\label{equ1}\\
&\geq \frac{1}{k}\sum_{v\in V\setminus S_{t}}x_v\cdot \Delta_{S_t}(v)\label{equ2}\\
&\geq \frac{1}{2k}\sum_{v\in V\setminus S_t}x_v\cdot  \Delta^2_{S_t}(v)\label{equ2.0}\\
&\geq \frac{1}{2k} (\overbrace{\E_{\bm L,\hat{\bm L}}[\sigma^2_{\bm L,\hat{\bm L}}(dom(\hat{\Psi}_{\pi})\cup S_t)]}^{\text{Exp. value of ${\sf Hyb}^2_t$}}-\sigma^2(S_t))\label{equ3}\\
&\geq \frac{1}{2k} (OPT_A(G,k)-\sigma^2(S_t))\label{equ4}\\
&\geq \frac{1}{2k}\left(OPT_A(G,k)-2\cdot \sigma(S_t)\right)\label{equ5}\\
&=\frac{1}{2k}OPT_A(G,k)-\frac{1}{k}GR_N(G,t),\label{equfund}
\end{align}
where \eqref{equ1} holds since the greedy strategy adds to $S_t$ the node $v$ maximizing $\sigma(S_t\cup\{v\})$, \eqref{equ2} comes from Lemma \ref{lem2}, \eqref{equ2.0} comes from Lemma \ref{cla.1}, \eqref{equ3} comes from Lemma \ref{lem1}, \eqref{equ4} comes from Lemma \ref{cla0}, and  \eqref{equ5} comes from Lemma \ref{cla1}.
Thus, by \eqref{equfund} and some manipulations we get the following recursive relation: $
{GR}_N(G,t+1)\geq \frac{1}{2k}\cdot OPT_A(G,k)+\left(1-\frac{1}{k}\right)\cdot {GR}_N(G,t)$ for any $t\in [k-1]_0.$ By applying iteratively the above inequality, we get
$
{GR}_N(G,k)\geq \frac{1}{2k}\cdot \sum_{t=0}^{k-1}\left(1-\frac{1}{k}\right)^{t}\cdot OPT_A(G,k)=\frac{1}{2}\left(1-\left(1-\frac{1}{k}\right)^{k}\right)\cdot OPT_A(G,k),
$
that leads to $
\frac{{GR}_N(G,k)}{OPT_A(G,k)}\geq \frac{1}{2}\left(1-\left(1-\frac{1}{k}\right)^{k}\right) \geq \frac{1}{2}\left(1-\frac{1}{e}\right),$
and this shows the claim. \qed

\begin{remark}\label{adgapcor}
By Theorem \ref{thm1}, we can easily show that, for any influence graph $G$ with $n$ nodes, the $k$-adaptivity gap of $G$ is at most $2\left(1-\left(1-\frac{1}{k}\right)^{k}\right)^{-1}\leq \frac{2e}{e-1}\approx 3.164$. 
\end{remark}

\section{The Efficiency of the Adaptive Greedy Algorithm}\label{ad-sec}
We show that the adaptive version of the greedy algorithm guarantees an even better approximation ratio of $1-\frac{1}{\sqrt{e}}\approx 0.393$ for the adaptive influence maximization problem. The {\em adaptive greedy algorithm} is an adaptive policy $\pi^{GR}_{k}$ that selects $k$ seeds in $k$ steps, and at each step $t$ selects the $t$-th seed that maximizes the expected influence spread conditioned by the observed realisation. 
\begin{theorem}\label{thm2}
Given an influence graph $G$ with $n$ nodes and $k\in [n]_2$, the adaptive greedy algorithm is a $1-\left(1-\frac{1}{2k}\right)^{k}\geq 1-\frac{1}{\sqrt{e}}$ approximation algorithm for the adaptive influence maximization problem (under IC and  myopic feedback) applied to $(G,k)$. 
\end{theorem}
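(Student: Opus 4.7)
The plan is to mirror the architecture of the proof of Theorem~\ref{thm1}: relate the adaptive greedy to the optimal adaptive policy $\pi$ via an intermediate (now itself adaptive) randomised policy and a suitable variant of the $2$-level hybrid policy ${\sf Hyb}^2_t$, and set up a recursion on $GR_A(G,t):=\E_{\bm L}[\sigma_{\bm L}(dom(\Psi^{GR}_t))]$, where $\Psi^{GR}_t$ denotes the random partial realisation observed after $t$ steps of the adaptive greedy. The target recursion is
\[
GR_A(G,t+1)\geq \tfrac{1}{2k}\cdot OPT_A(G,k)+\bigl(1-\tfrac{1}{2k}\bigr)\cdot GR_A(G,t),\qquad t\in [k-1]_0.
\]
Iterating it in the same way as in the last display of Subsection~\ref{proofthm1} gives $GR_A(G,k)\geq \bigl(1-(1-\tfrac{1}{2k})^{k}\bigr)\cdot OPT_A(G,k)\geq (1-1/\sqrt{e})\cdot OPT_A(G,k)$, which is the claim.

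To establish this recursion I would condition on the realisation $\psi$ observed by the greedy after $t$ steps. Since the adaptive greedy selects the seed that maximises the conditional marginal gain,
\[
GR_A(G,t+1)-GR_A(G,t)=\E_{\Psi^{GR}_t}\!\left[\max_{v\in V}\E_{\bm L}\!\bigl[\sigma_{\bm L}(dom(\Psi^{GR}_t)\cup\{v\})-\sigma_{\bm L}(dom(\Psi^{GR}_t))\,\big|\,\Psi^{GR}_t\bigr]\right],
\]
so it suffices to show, for every $\psi$, the conditional inequality
\[
\max_{v}\E_{\bm L}\!\bigl[\sigma_{\bm L}(dom(\psi)\cup\{v\})-\sigma_{\bm L}(dom(\psi))\mid\psi\bigr]\geq \tfrac{1}{2k}\bigl(OPT_A(G,k)-\E_{\bm L}[\sigma_{\bm L}(dom(\psi))\mid\psi]\bigr),
\]
and then average over $\psi$. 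I would chase the same chain of inequalities as in Subsection~\ref{proofthm1}, all conditional on $\psi$: bound the max from below by the marginal gain of a randomised adaptive policy that picks a node $\bm\rho$ with probability $x_v/k$ (a conditional analogue of Lemma~\ref{lem2}); pass to the $2$-level world via $\Delta\geq\tfrac12\Delta^2$ (Lemma~\ref{cla.1}); use the adaptive submodularity of the $2$-level model (Lemma~\ref{cla3}) together with a conditional version of Lemma~\ref{lem1} to recognise the conditional expected spread of the $2$-level hybrid; and finally invoke a conditional version of Lemma~\ref{cla0} to replace the hybrid's value by $OPT_A(G,k)$.

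The step where the improvement over Theorem~\ref{thm1} is realised, and where the main technical obstacle lies, is the counterpart of inequality~\eqref{equ5}: in the non-adaptive proof, $\sigma^2(S_t)\leq 2\sigma(S_t)$ (Lemma~\ref{cla1}) is what produces the coefficient $\tfrac{1}{k}$ in front of $GR_N(G,t)$ in~\eqref{equfund}. In the adaptive case, however, the seeds in $dom(\psi)$ have already revealed their outgoing edges through $\psi$, so in a suitably \emph{conditional} variant of the $2$-level model the ``second chance'' offered to those seeds can be coupled to the first chance already observed in $\psi$; only the newly added seeds outside $dom(\psi)$ should enjoy two independent chances. With this coupling the bound $\E[\sigma^2_{\mathrm{cond}}(dom(\psi))\mid\psi]=\E_{\bm L}[\sigma_{\bm L}(dom(\psi))\mid\psi]$ replaces the factor-$2$ bound, and the coefficient of $GR_A(G,t)$ in the recursion drops from $\tfrac{1}{k}$ to $\tfrac{1}{2k}$, upgrading the approximation ratio from $\tfrac12(1-1/e)$ to $1-1/\sqrt{e}$. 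The main work I would expect is verifying that this conditional coupling still yields valid analogues of Lemmas~\ref{lem1} and~\ref{cla0}, i.e.\ that the conditional $2$-level hybrid remains an upper bound on $OPT_A(G,k)$; the adaptive submodularity argument of Lemma~\ref{cla3} should extend to the asymmetric conditional model essentially unchanged, since only the newly added seeds ever exercise their two chances.
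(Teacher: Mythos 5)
Your proposal is correct and follows essentially the same route as the paper's proof of Theorem~\ref{thm2}: the paper also conditions on the observed realisation $\Psi_t$, bounds the greedy's conditional max by the $x_v/k$-weighted average, passes to a two-chance model via the conditional bound $\Delta^2_\psi(v)\leq 2\Delta_\psi(v)$ (Lemma~\ref{scla.1}), recognises the hybrid policy through conditional analogues of Lemmas~\ref{lem1} and~\ref{cla0} (Lemmas~\ref{slem1} and~\ref{scla0}, resting on the strong adaptive submodularity of Lemma~\ref{scla3}), and derives exactly your recursion $GR_A(G,t+1)\geq \frac{1}{2k}OPT_A(G,k)+\left(1-\frac{1}{2k}\right)GR_A(G,t)$. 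In particular, your key observation---that only seeds outside $dom(\psi)$ should receive two independent chances, so that the base term stays the ordinary conditional spread $\E_{\bm L}[\sigma_{\bm L}(dom(\psi))\mid\psi\subseteq\Phi]$ and no factor-$2$ is lost on it---is precisely the paper's ``strong 2-level diffusion model'' $\sigma^{2}_{\bm L,\hat{\bm L},\psi}(S)=\sigma_{\bm L^2(S\setminus dom(\psi))}(S)$.
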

In the proof of Theorem \ref{thm2} we relate the expected influence spread coming from the adaptive greedy algorithm with that of the optimal adaptive policy, passing trough a new hybrid adaptive policy. In the same spirit of Theorem~\ref{thm1}, we also consider a new diffusion model and a new notion of adaptive submodularity (slightly different from that of Subsection \ref{sub1}). To show the main theorem (see Subsection \ref{proofthm2}) we need some notation and preliminary results. Let $G=(V=[n],E,(p_{uv})_{(u,v)\in E})$ be an influence graph, and let $k\in[n]_2$. Let $\pi$ be an optimal adaptive policy, and let $x=(x_1,\ldots, x_n)$ be the vector such that $x_i$ is the probability that node $i$ is selected by $\pi$. 

\subparagraph*{Strong 2-level Diffusion Model.} We define $\bm L$, $\hat{\bm L}$, ${\bm L}^2(S)$, $\Phi$, $\hat{ \Phi}$ as in the 2-level diffusion model considered in Section \ref{sec_inarb}. Given a partial realisation $\psi$ and a set $S\subseteq V$, let $\sigma^{2}_{\bm L,\hat{\bm L},\psi}(S):=\sigma_{\bm L^2(S\setminus dom(\psi))}(S)$ denote the influence spread induced by $S$ in live-edge graph $\bm L^2(S\setminus dom(\psi))$, and let $\sigma^{2}_{\psi}(S):=\mathbb{E}_{\bm L,\hat{\bm L}}[\sigma^{2}_{\bm L,\hat{\bm L},\psi}(S)|\psi\subseteq \Phi]$ denote the above influence spread in expectation, conditioned by partial realisation $\psi$. 

\subparagraph*{Strong 2-level Hybrid Adaptive Policy.} Given a partial realisation $\psi$, let ${\sf Hyb}_\psi^2$ be a hybrid adaptive policy defined as follows: (i) ${\sf Hyb}_\psi^2$ selects all the nodes in $dom(\psi)$ as seeds; (ii) then, ${\sf Hyb}_\psi^2$ adds to $dom(\psi)$ all the nodes that policy $\pi$ would have select when starting from the empty realisation, and observing, at each step, partial realisations coming from the live-edge graph $\hat{\bm L}$ only; (iii) finally, denoting with $\hat{\Psi}_{\pi}$ the random realisation returned by policy $\pi$, the expected influence spread of ${\sf Hyb}_\psi^2$ is defined as $\E_{\bm L,\hat{\bm L}}[\sigma^{2}_{\bm L,\hat{\bm L},\psi}(dom(\psi)\cup dom(\hat{\Psi}_{\pi}))|~\psi\subseteq~\Phi]$, i.e., it is the expected influence spread determined $dom(\psi)\cup dom(\hat{\Psi}_{\pi})$ in the strong 2-level diffusion model, conditioned by partial realisation $\psi$. Differently from the 2-level hybrid policy defined in Section \ref{sec_inarb}, the expected influence spread of the hybrid policy defined here is conditioned by partial realisation $\psi$ and the unique seeds that have two chances to influence their neighbors are those in $dom(\hat{\Psi}_{\pi})\setminus dom(\psi)$.

\subparagraph*{Strong Adaptive Submodularity of the Strong 2-level Diffusion Model.}Given two partial realisations $\psi,\hat{\psi}$  and $v\in V$, let
\begin{multline*}
{\Delta}_{\psi}^2(v|\hat{\psi})=\nonumber\\
\E_{\bm L,\hat{\bm L}}\left[\sigma^{2}_{\bm L,\hat{\bm L},\psi}(\{v\}\cup dom(\psi)\cup dom(\hat{\psi}))-\sigma^{2}_{\bm L,\hat{\bm L},\psi}(dom(\psi)\cup dom(\hat{\psi}))|\psi\subseteq \Phi,\hat{\psi}\subseteq \hat{\Phi}\right],
\end{multline*}
i.e., $\Delta^2_\psi(v|\hat{\psi})$ is the expected increment of the influence spread in the strong 2-level diffusion model when adding seed $v$ to the set of nodes $dom(\hat{\psi})\cup dom(\psi)$, conditioned by the observation of partial realisations $\hat{\psi}$ and $\psi$. We say that the strong 2-level diffusion model is {\em strongly adaptive submodular} if, for any partial realisations $\hat{\psi},\hat{\psi}'$ with $\hat{\psi}\subseteq \hat{\psi}'$, any partial realisation $\psi$, and any $v\in V$, we have that $\Delta^2_\psi(v|\hat{\psi})\geq {\Delta}^2_\psi(v|\hat{\psi}')$. 

\begin{lemma}\label{scla3}
The strong 2-level diffusion model is strongly adaptive submodular. 
\end{lemma}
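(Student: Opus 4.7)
The plan is to reduce Lemma \ref{scla3} to two ingredients: (i) a pointwise (per-realisation) submodularity statement for the auxiliary set-function $g_{L,\hat L, T}(S) := \sigma_{L\cup(\hat L\cap E(S\setminus T))}(S)$, valid for every fixed $L,\hat L\subseteq E$ and every fixed $T\subseteq V$; and (ii) an independence argument that lets me discard the extra information contained in $\hat\psi'$ beyond $\hat\psi$. Throughout I set $A:=dom(\psi)$, $B:=dom(\hat\psi)$, $B':=dom(\hat\psi')$, so $B\subseteq B'$, and I observe that the $\psi$-conditioning is identical on both sides of the desired inequality and therefore plays no active role.

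For ingredient (i), I show that for every fixed $L,\hat L\subseteq E$ and $T\subseteq V$ the function $g_{L,\hat L, T}$ is monotone and submodular in $S$. Abbreviating $H(S):=L\cup(\hat L\cap E(S\setminus T))$, I decompose
\[
g_{L,\hat L, T}(S) \;=\; \sum_{w\in V} \mathbbm{1}\bigl[w \in R_{H(S)}(S)\bigr],
\]
and I prove that each summand is a coverage indicator $\mathbbm{1}[S\cap Q_w\ne\emptyset]$ for $Q_w := \{v\in V : w\in R_{H(\{v\})}(\{v\})\}$. The direction $S\cap Q_w\ne\emptyset\Rightarrow w\in R_{H(S)}(S)$ is immediate from $H(\{v\})\subseteq H(S)$ when $v\in S$. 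For the reverse, given a path $v_0,\ldots,v_\ell=w$ valid in $H(S)$, I take the largest index $j$ with $(v_j,v_{j+1})\in \hat L\setminus L$, or $j=0$ if no such index exists. In both cases $v_j\in S$: either as the starting seed, or because the activated $\hat L$-edge $(v_j,v_{j+1})$ forces $v_j\in S\setminus T$. By maximality of $j$, all edges after index $j$ lie in $L$, so the suffix $v_j,\ldots,w$ is valid in $H(\{v_j\})$, certifying $v_j\in Q_w$. Each summand is therefore a standard coverage function, and $g_{L,\hat L, T}$ is submodular.

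For ingredient (ii), I first dispose of the trivial case $v\in A\cup B'$: then $\{v\}\cup A\cup B'=A\cup B'$ gives $\Delta^2_\psi(v|\hat\psi')=0$, while $\Delta^2_\psi(v|\hat\psi)\ge 0$ by monotonicity. For $v\notin A\cup B'$, I apply the pointwise submodularity of $g_{\bm L,\hat{\bm L}, A}$ to the nested sets $A\cup B\subseteq A\cup B'$ and the extra element $v$, obtaining realisation by realisation
\begin{align*}
& g_{\bm L,\hat{\bm L}, A}(\{v\}\cup A\cup B) - g_{\bm L,\hat{\bm L}, A}(A\cup B) \\
&\qquad \ge \; g_{\bm L,\hat{\bm L}, A}(\{v\}\cup A\cup B') - g_{\bm L,\hat{\bm L}, A}(A\cup B').
\end{align*}
Taking conditional expectation given $\{\psi\subseteq\Phi,\,\hat\psi'\subseteq\hat\Phi\}$ turns the right-hand side into $\Delta^2_\psi(v|\hat\psi')$. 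It remains to show that the conditional expectation of the left-hand side under this same conditioning equals $\Delta^2_\psi(v|\hat\psi)$. This is where independence enters: the quantity on the left depends on $\hat{\bm L}$ only through edges outgoing from $(B\cup\{v\})\setminus A$; since $v\notin B'$, these source nodes are disjoint from $B'\setminus B$, so the extra observations conveyed by $\hat\psi'\setminus\hat\psi$ concern edges of $\hat{\bm L}$ that are independent of the relevant ones (and of $\bm L$), and hence cannot alter the conditional expectation.

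The main obstacle I expect is ingredient (i). The difficulty is that the live-edge graph $H(S)$ itself depends on $S$, so the standard reduction to path-based coverage breaks down: the natural indicator of a multi-node witness $\mathbbm{1}[J\subseteq S]$ is supermodular, not submodular. Submodularity of $g_{L,\hat L,T}$ therefore hinges on the ``collapse to singleton witnesses'' argument sketched above, which is specific to the structure of the $2$-level diffusion. Once pointwise submodularity is secured, the case split on $v$ and the independence argument are routine bookkeeping, and $\Delta^2_\psi(v|\hat\psi)\ge\Delta^2_\psi(v|\hat\psi')$ follows by chaining the displays.
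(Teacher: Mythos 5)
Your proof is correct, and it rests on a genuinely different key lemma than the paper's, even though both exploit the same underlying combinatorial fact. The paper's proof is a short rewriting argument: conditioning on $\hat{\psi}\subseteq\hat{\Phi}$ lets it replace each two-chance seed $u\in dom(\hat{\psi})\setminus dom(\psi)$ by its observed image, so that both increments become ordinary-model increments $\E_{\hat{\Phi}(v)}\left[\E_{\bm L}[\sigma_{\bm L}(\hat{\Phi}(v)\cup U)-\sigma_{\bm L}(U)\,|\,\psi\subseteq\Phi]\right]$ with nested sets $U=Im(\hat{\psi}_{\setminus\psi})\cup dom(\psi)\subseteq Y=Im(\hat{\psi}'_{\setminus\psi})\cup dom(\psi)$, after which it invokes Kempe et al.'s submodularity of $\sigma_L$ in the seed set for each fixed live-edge graph. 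You never perform this image-rewriting; instead, your ingredient (i) establishes submodularity of the strong 2-level spread itself, $S\mapsto\sigma_{L\cup(\hat{L}\cap\{(u,z)\in E:\,u\in S\setminus T\})}(S)$, a function to which Kempe et al.'s result does not apply off the shelf because the live-edge graph varies with the argument; your witness-collapse (``last $\hat{L}$-edge'') argument is precisely the combinatorial content that makes the paper's rewriting legitimate, but you package it as a standalone per-realisation coverage lemma rather than as a manipulation of conditional expectations. You then reconcile the two different conditionings ($\hat{\psi}$ versus $\hat{\psi}'$) by an explicit edge-independence argument, whereas the paper silently treats $\hat{\Phi}(v)$ as having its unconditional distribution under both conditionings (valid exactly because $v$'s outgoing $\hat{\bm L}$-edges are disjoint from those fixed by $\hat{\psi}'$), and it does not isolate the degenerate case $v\in dom(\hat{\psi}')$, which your case split dispatches via monotonicity. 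The trade-off: the paper's route is shorter and leans on a known result; yours produces a reusable pointwise submodularity statement (it would equally re-derive Lemma \ref{cla3}) and is airtight on the conditioning and independence issues, which are the most delicate points of the statement.
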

\subparagraph*{From the Ordinary to the Strong 2-level Diffusion Model.}
Given a partial realisation $\psi$, and $v\in V$, let $\Delta_\psi(v):=\E_{\bm L}[\sigma_{\bm L}(\{v\}\cup dom(\psi))-\sigma_{\bm L}(dom(\psi))|\psi \subseteq \Phi]$, that is the expected increment under the ordinary diffusion model when adding a new node to the nodes in $dom(\psi)$ conditioned by partial realisation $\psi$, and let $\Delta^2_\psi(v):=\Delta^{2}_{\psi}(v|\emptyset)=\E_{\bm L,\hat{\bm L}}[\sigma_{\bm L,\hat{\bm L},\psi}(\{v\}\cup dom(\psi))-\sigma_{\bm L,\hat{\bm L},\psi}(dom(\psi))|\psi \subseteq  \Phi]$, that is the above conditional expectation, but w.r.t. the strong 2-level diffusion model. The following lemma can be shown analogously to Lemma \ref{cla.1}.
\begin{lemma}\label{scla.1}
We have that $\Delta^2_\psi(v)\leq 2\cdot \Delta_\psi(v)$ for any partial realisation $\psi$ and $v\in V$. 
\end{lemma}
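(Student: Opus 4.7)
Since the paper indicates that the proof is analogous to Lemma~\ref{cla.1}, my plan is to port the argument of Lemma~\ref{cla.1} into the conditional setting, verifying that conditioning on the partial realisation $\psi$ does not spoil the symmetry it exploits. First I would dispose of the degenerate case $v \in dom(\psi)$: here $\{v\} \cup dom(\psi) = dom(\psi)$, so both $\Delta^2_\psi(v)$ and $\Delta_\psi(v)$ vanish and the inequality is trivial. From now on I assume $v \notin dom(\psi)$.

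The probabilistic core of the argument is to split $\bm L = (\bm L_v, \bm L_{-v})$ into $v$'s out-edges and the remaining edges of $\bm L$, and to observe that the event $\{\psi \subseteq \Phi\}$ is determined by the out-edges of the nodes in $dom(\psi)$ under $\bm L$; since $v \notin dom(\psi)$, this event lies in the $\sigma$-algebra generated by $\bm L_{-v}$. Consequently, under the conditional law, $\bm L_v$ and $\hat{\bm L}_v$ remain i.i.d.\ and both are independent of $\bm L_{-v}$, and therefore $\bm L' := (\hat{\bm L}_v, \bm L_{-v})$ has the same conditional distribution as $\bm L$. This is the analogue, at the level of partial realisations, of the $\bm L_v \leftrightarrow \hat{\bm L}_v$ symmetry that drives the proof of Lemma~\ref{cla.1}.

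On the structural side I would reproduce the simple-path argument of the unconditioned case: since every edge of $\hat{\bm L}_v$ emanates from $v$, any simple path starting in $\{v\} \cup dom(\psi)$ and living in $\bm L \cup \hat{\bm L}_v$ employs at most one out-edge of $v$, either from $\bm L_v$ or from $\hat{\bm L}_v$ but not both. This yields the pointwise identity
\begin{equation*}
R_{\bm L \cup \hat{\bm L}_v}(\{v\} \cup dom(\psi)) \;=\; R_{\bm L}(\{v\} \cup dom(\psi)) \cup R_{\bm L'}(\{v\} \cup dom(\psi)).
\end{equation*}
Taking cardinalities, using inclusion-exclusion with $R_{\bm L_{-v}}(\{v\}\cup dom(\psi)) \subseteq R_{\bm L}(\{v\} \cup dom(\psi)) \cap R_{\bm L'}(\{v\} \cup dom(\psi))$ to control the intersection, subtracting $\sigma_{\bm L}(dom(\psi))$, taking conditional expectations, and invoking the distributional identity $\bm L' \sim \bm L$ to fuse the $\sigma_{\bm L}$ and $\sigma_{\bm L'}$ contributions into a single factor of $2$, then delivers the desired $\Delta^2_\psi(v) \leq 2 \Delta_\psi(v)$.

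The main obstacle I anticipate is the bookkeeping of the intersection term: one must argue that its conditional expectation is large enough to absorb the slack $\E[\sigma_{\bm L}(dom(\psi)) \mid \psi \subseteq \Phi]$ exactly, rather than leaving an additive residue that would weaken the bound. This is precisely the manipulation carried out in the appendix proof of Lemma~\ref{cla.1}, and since conditioning on $\psi$ preserves the independence and distributional-equivalence structure it relies on (as established in the second paragraph), the same argument transfers to the present setting without modification.
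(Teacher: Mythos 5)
Your reduction to the case $v \notin dom(\psi)$, your observation that the event $\{\psi \subseteq \Phi\}$ is determined by $\bm L_{-v}$ (so that, conditionally, $\bm L_v$ and $\hat{\bm L}_v$ stay i.i.d.\ and $\bm L' := \hat{\bm L}_v \cup \bm L_{-v}$ is distributed as $\bm L$), and the union identity $R_{\bm L \cup \hat{\bm L}_v}(\{v\}\cup dom(\psi)) = R_{\bm L}(\{v\}\cup dom(\psi)) \cup R_{\bm L'}(\{v\}\cup dom(\psi))$ are all correct. The genuine gap is exactly the step you defer at the end. Writing $D := dom(\psi)$, your chain yields
\begin{align*}
\Delta^2_\psi(v) \;\leq\; 2\,\E[\sigma_{\bm L}(\{v\}\cup D)\mid \psi\subseteq\Phi] \;-\; \E[\sigma_{\bm L_{-v}}(\{v\}\cup D)\mid \psi\subseteq\Phi] \;-\; \E[\sigma_{\bm L}(D)\mid \psi\subseteq\Phi],
\end{align*}
so to reach $2\Delta_\psi(v)$ you need $\E[\sigma_{\bm L_{-v}}(\{v\}\cup D)\mid \psi\subseteq\Phi] \geq \E[\sigma_{\bm L}(D)\mid \psi\subseteq\Phi]$. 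This inequality is false: in $\sigma_{\bm L}(D)$ the set $D$ may spread influence \emph{through} $v$, using $v$'s out-edges, while in $\sigma_{\bm L_{-v}}(\{v\}\cup D)$ those edges are deleted. Concretely, take $V=\{u,v,z\}$, $E=\{(u,v),(v,z)\}$, $p_{uv}=1$, $p_{vz}=1/2$, and $\psi=\{(u,\{u,v\})\}$ (the conditioning is then vacuous): the left-hand side equals $2$ while the right-hand side equals $5/2$. Moreover, the slack is not an artifact of lower-bounding the intersection by $R_{\bm L_{-v}}(\{v\}\cup D)$: on the same instance the exact intersection $R_{\bm L}(\{v\}\cup D)\cap R_{\bm L'}(\{v\}\cup D)$ has conditional expected size $9/4 < 5/2$, and the two sides of the lemma, computed from the displayed definitions, come out as $\Delta^2_\psi(v)=1/4$ and $2\Delta_\psi(v)=0$. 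So whenever $v$ is already reached by $dom(\psi)$ with positive probability, no sharper bookkeeping of the intersection term can close your argument; the problematic case must be neutralized structurally, not absorbed in expectation.

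This is also where your last paragraph misattributes the fix: the appendix proof of Lemma \ref{cla.1} performs no inclusion--exclusion at all. It first rewrites both marginal gains as the expected spread of $v$ \emph{alone} in a residual live-edge graph ($\bm L_{\setminus S}$, resp.\ $\bm L^2(\{v\})_{\setminus S}$), obtained by deleting every node already activated by the previously selected seeds together with its incident edges, and only then runs the two-chances/swap argument of Lemma \ref{cla1} inside that residual graph; in particular, when $v$ itself has already been activated, both marginals are treated as zero and the factor-$2$ bound is trivial. That deletion of already-activated nodes is precisely the mechanism that handles the case on which your chain breaks, and it is what ``analogously to Lemma \ref{cla.1}'' refers to: porting the proof means conditioning the residual-graph identity on $\psi\subseteq\Phi$ (your measurability observation is the right tool for that step), not re-deriving the unconditioned factor-$2$ inequality by inclusion--exclusion. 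As it stands, your proposal takes a genuinely different route, and its final step fails.
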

\subparagraph*{From the Optimal to the Greedy Adaptive Policy.}
The following lemma will be used to relate the strong 2-level hybrid adaptive policy with the adaptive greedy policy; its proof is similar to that of Lemma \ref{lem1}, but uses the concept of strong adaptive submodularity.
\begin{lemma}\label{slem1}
For any partial realisation $\psi$, we have,
\begin{multline}
\E_{\bm L,\hat{\bm L}}[\sigma^{2}_{\bm L,\hat{\bm L},\psi}(dom(\psi)\cup dom(\hat{\Psi}_{\pi}))|~\psi\subseteq\Phi]\leq \E_{\bm L}[\sigma_{\bm L}(dom(\psi))|\psi~\subseteq~\Phi]\\
+\sum_{v\in V\setminus dom(\psi)}x_v\cdot \Delta^2_\psi(v).
\end{multline}
\end{lemma}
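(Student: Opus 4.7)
The plan is to mirror the proof of Lemma~\ref{lem1}, replacing ordinary adaptive submodularity with the strong version established in Lemma~\ref{scla3}. Throughout, I would condition on the event $\psi\subseteq\Phi$, and exploit the fact that $\hat{\bm L}$ is independent of $\bm L$ (and hence of $\Phi$), so the sequence of seeds selected by $\pi$ when fed the realisations induced by $\hat{\bm L}$ is independent of the conditioning event.

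First, let $v_1,\ldots,v_k$ denote the (random) seeds selected by $\pi$ run on $\hat{\Phi}$, let $T_t:=\{v_1,\ldots,v_t\}$, and let $\hat{\psi}_t$ be the associated partial realisation after step $t$, so that $dom(\hat{\psi}_t)=T_t$ and $dom(\hat{\Psi}_\pi)=T_k$. I would telescope:
\[
\sigma^{2}_{\bm L,\hat{\bm L},\psi}(dom(\psi)\cup T_k) - \sigma^{2}_{\bm L,\hat{\bm L},\psi}(dom(\psi)) = \sum_{t=1}^{k} \bigl[\sigma^{2}_{\bm L,\hat{\bm L},\psi}(dom(\psi)\cup T_t) - \sigma^{2}_{\bm L,\hat{\bm L},\psi}(dom(\psi)\cup T_{t-1})\bigr].
\]
At $t=0$, since $(dom(\psi)\cup T_0)\setminus dom(\psi)=\emptyset$, we have $\sigma^{2}_{\bm L,\hat{\bm L},\psi}(dom(\psi))=\sigma_{\bm L}(dom(\psi))$, which produces the baseline term on the right-hand side of the claim after taking expectation.

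Second, taking conditional expectation given $\psi\subseteq\Phi$, for each $t$ I would condition further on $\hat{\psi}_{t-1}\subseteq\hat{\Phi}$. Because $\pi$ is a deterministic function of its observed partial realisation, $v_t=\pi(\hat{\psi}_{t-1})$ is determined by $\hat{\psi}_{t-1}$, and the $t$-th telescope term equals $\Delta^{2}_\psi(v_t|\hat{\psi}_{t-1})$ in conditional expectation, directly by the definition of that quantity. Applying Lemma~\ref{scla3} with $\emptyset\subseteq\hat{\psi}_{t-1}$ gives $\Delta^{2}_\psi(v_t|\hat{\psi}_{t-1})\leq \Delta^{2}_\psi(v_t|\emptyset)=\Delta^{2}_\psi(v_t)$. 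Summing over $t$, taking outer expectation, and using that the seeds $v_1,\ldots,v_k$ are distinct, yields
\[
\sum_{t=1}^{k} \E\bigl[\Delta^{2}_\psi(v_t)\bigr] = \sum_{v\in V} \mathbb{P}\bigl[v\in dom(\hat{\Psi}_\pi)\bigr]\cdot \Delta^{2}_\psi(v) = \sum_{v\in V\setminus dom(\psi)} x_v\cdot \Delta^{2}_\psi(v),
\]
where the last equality uses that $\hat{\bm L}$ and $\bm L$ share the same distribution (so $dom(\hat{\Psi}_\pi)$ and $dom(\Psi_\pi)$ have matching marginals) and that $\Delta^{2}_\psi(v)=0$ whenever $v\in dom(\psi)$, since in that case $\{v\}\cup dom(\psi)=dom(\psi)$.

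The main obstacle I expect is bookkeeping rather than conceptual: carefully arranging the nested conditioning so that (i) $v_t$ is measurable with respect to $\hat{\psi}_{t-1}$, (ii) conditioning on $\psi\subseteq\Phi$ is innocuous for quantities depending only on $\hat{\bm L}$, and (iii) the $(v,\hat{\psi})\mapsto\Delta^{2}_\psi(v|\hat{\psi})$ notation of Section~\ref{ad-sec} is used consistently with its slightly different counterpart in Section~\ref{sub1}. The one substantive ingredient is the strong adaptive submodularity established in Lemma~\ref{scla3}, without which the per-step bound would fail; everything else is the same telescoping pattern already used for Lemma~\ref{lem1}.
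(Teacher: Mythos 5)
Your proposal is correct and follows essentially the same route as the paper's proof: both decompose the hybrid policy's spread into the baseline term $\E_{\bm L}[\sigma_{\bm L}(dom(\psi))\,|\,\psi\subseteq\Phi]$ plus the per-selection increments of $\pi$ (your time-indexed telescope is just an explicit form of the paper's sum over indicator variables $\chi(v|\hat{\psi}')$), then apply the strong adaptive submodularity of Lemma~\ref{scla3} to replace each conditional increment $\Delta^2_\psi(v|\hat{\psi}_{t-1})$ by $\Delta^2_\psi(v|\emptyset)=\Delta^2_\psi(v)$, and finally use that the marginal selection probabilities of $\pi$ are the $x_v$. Your extra bookkeeping (distinctness of seeds, vanishing increments for $v\in dom(\psi)$, independence of $\hat{\bm L}$ from the conditioning event) only makes explicit what the paper leaves implicit.
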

The following lemma shows that the optimal adaptive influence spread is upper bounded by that expected influence spread of the strong 2-level hybrid adaptive policy, and its proof is completely analogue to that of Lemma \ref{cla0}. 
\begin{lemma}\label{scla0}
We have $OPT_A(G,k)\leq \E_{\bm L,\hat{\bm L}}[\sigma^{2}_{\bm L,\hat{\bm L},\psi}(dom(\psi)\cup dom(\hat{\Psi}_{\pi}))|~\psi\subseteq\Phi]$ for any partial realisation $\psi$.
\end{lemma}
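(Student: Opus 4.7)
The plan is to mirror the proof of Lemma~\ref{cla0}. The starting observation is that since $\hat{\bm L}$ has the same distribution as $\bm L$, executing $\pi$ on $\hat{\bm L}$ yields expected influence spread $OPT_A(G,k)$; and since $\hat{\bm L}$ is independent of $\bm L$ while the event $\{\psi\subseteq\Phi\}$ is measurable in $\bm L$ alone, conditioning on this event does not alter the joint distribution of $\hat{\bm L}$ and of $dom(\hat{\Psi}_\pi)$. Consequently
\begin{equation*}
OPT_A(G,k)=\E_{\bm L,\hat{\bm L}}\bigl[\sigma_{\hat{\bm L}}(dom(\hat{\Psi}_\pi))\mid\psi\subseteq\Phi\bigr],
\end{equation*}
and the lemma reduces to showing that this conditional expectation is upper-bounded by $\E_{\bm L,\hat{\bm L}}[\sigma^{2}_{\bm L,\hat{\bm L},\psi}(dom(\psi)\cup dom(\hat{\Psi}_\pi))\mid\psi\subseteq\Phi]$.

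To establish this dominance I would condition further on $T^*:=dom(\hat{\Psi}_\pi)$ together with $\hat F:=\hat{\bm L}\cap E_{T^*}$, the myopic observations that drive $\pi$'s choices on $\hat{\bm L}$. Under this enriched conditioning and the $\psi$-conditioning, the residual randomness in both live-edge graphs is a product of independent Bernoulli edges, so it suffices to establish seed inclusion and a pointwise edge-by-edge coupling. The seed inclusion $dom(\psi)\cup T^*\supseteq T^*$ is immediate. For edges, I would distinguish three classes: edges in $\hat F\cap E_{T^*\setminus dom(\psi)}$ are carried into the 2-level graph by construction of $\bm L^2(T^*\setminus dom(\psi))$; edges in $E\setminus E_{T^*}$ reduce on both sides to independent Bern$(p_e)$ samples (drawn from $\bm L$ on the 2-level side and from $\hat{\bm L}$ on the other), which can be coupled to coincide; the remaining edges out of $dom(\psi)$ are fixed by $\psi$ and are already present in the 2-level graph via the $\bm L$-component. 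Combining the seed and edge dominance yields pointwise reachability dominance in the coupling, hence domination in expectation.

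The main obstacle is the careful bookkeeping of the coupling under the $\psi$-conditioning: it fixes $\bm L$ to agree with $\psi$ on $E_{dom(\psi)}$, and one has to verify that this constraint is compatible with the independent coupling of the remaining $\bm L$- and $\hat{\bm L}$-edges while preserving edge-wise dominance in every class. The independence of $\bm L$ and $\hat{\bm L}$ and the product structure of the two live-edge graphs conditional on $(T^*,\hat F)$ make this coupling possible, and no machinery beyond that already used in Lemma~\ref{cla0} is required. Integrating out the enriched conditioning, followed by the $\psi$-conditioning, yields the desired inequality.
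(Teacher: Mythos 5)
Your overall plan is the paper's plan (the paper proves Lemma~\ref{scla0} by declaring it ``completely analogue'' to the one-line proof of Lemma~\ref{cla0}), and your first reduction is sound: since the event $\{\psi\subseteq\Phi\}$ is measurable with respect to $\bm L$ alone and $\hat{\bm L}$ is independent of $\bm L$, indeed $OPT_A(G,k)=\E_{\bm L,\hat{\bm L}}[\sigma_{\hat{\bm L}}(dom(\hat{\Psi}_{\pi}))\mid\psi\subseteq\Phi]$. The gap is in your edge-by-edge coupling, and it is fatal. Writing $T^*:=dom(\hat{\Psi}_{\pi})$, the adaptive run you must dominate uses, conditioned on $(T^*,\hat F)$, \emph{all} of $\hat F=\hat{\bm L}\cap\{(u,v)\in E:u\in T^*\}$ --- including the edges leaving $T^*\cap dom(\psi)$ --- plus fresh $\hat{\bm L}$-samples on the edges leaving $dom(\psi)\setminus T^*$ (influence can pass through non-seed nodes of $dom(\psi)$). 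On the hybrid side, every edge leaving $dom(\psi)$ in $\bm L^2\bigl((dom(\psi)\cup T^*)\setminus dom(\psi)\bigr)=\bm L\cup\bigl(\hat{\bm L}\cap\{(u,v)\in E:u\in T^*\setminus dom(\psi)\}\bigr)$ comes from $\bm L$ only, and the conditioning $\psi\subseteq\Phi$ \emph{freezes} those edges to exactly the ones recorded by $\psi$. So your class of edges in $E\setminus E_{T^*}$ is not ``fresh on both sides'' (on the 2-level side the sub-class leaving $dom(\psi)$ is $\psi$-frozen), and your third class only asserts what the 2-level graph contains, not that it dominates the $\hat F$- and fresh-$\hat{\bm L}$-edges the adaptive side draws out of $dom(\psi)$: when $\psi$ records a failed activation, the hybrid edge is deterministically absent while the adaptive edge is present with probability $p_e$, and no coupling can make a frozen-absent edge dominate a fresh-present one.

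This is not repairable bookkeeping, because the asserted inequality is false pointwise in $\psi$. Take $V=\{u,z,w\}$ with the single edge $(u,z)$, $p_{uz}=\frac{1}{2}$, and $k=2$, and let $\pi$ be the optimal policy that always selects $u$ and then $w$, so that $T^*=\{u,w\}$ deterministically and $OPT_A(G,2)=\frac{5}{2}$ (any policy avoiding $\{u,w\}$ achieves at most $2$). Choose $\psi=\{(u,\{u\})\}$, i.e., $(u,z)\notin\bm L$. Then $(dom(\psi)\cup T^*)\setminus dom(\psi)=\{w\}$ has no outgoing edges, so the strong 2-level live-edge graph is just $\bm L$, and $\E_{\bm L,\hat{\bm L}}[\sigma^{2}_{\bm L,\hat{\bm L},\psi}(\{u,w\})\mid\psi\subseteq\Phi]=\E_{\bm L}[\sigma_{\bm L}(\{u,w\})\mid(u,z)\notin\bm L]=2<\frac{5}{2}$. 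So no blind proof of the statement as written can succeed; it can only hold after averaging out $\psi$ (as it is effectively used, under $\E_{\Psi_t}$, in the proof of Theorem~\ref{thm2}), or for a hybrid that also grants the $\hat{\bm L}$-edges to the seeds in $T^*\cap dom(\psi)$. Note that the same objection applies to the paper's own ``analogous'' argument, whose first step would read $OPT_A(G,k)=\E_{\bm L,\hat{\bm L}}[\sigma^{2}_{\bm L,\hat{\bm L},\psi}(dom(\hat{\Psi}_{\pi}))\mid\psi\subseteq\Phi]$ and breaks for exactly the seeds in $dom(\hat{\Psi}_{\pi})\cap dom(\psi)$; your attempt founders at the same spot, but, to your credit, it makes the problematic edge class explicit rather than hiding it.
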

\subsection{Proof of Theorem \ref{thm2}}\label{proofthm2}
Armed with the above lemmas, and by using a similar approach as in the proof of Theorem~\ref{thm1}, we can prove Theorem~\ref{thm2}. Given $t\in [k]_0$, let $\bm S_{t}$ denote the (random) set of the first $t$ seeds selected by the adaptive greedy policy $\pi^{GR}_k$, and let $\Psi_t$ be the (random) partial realisation such that $dom(\Psi_t)=\bm S_t$ (i.e., the partial realisation observed by policy $\pi^{GR}_k$ at the end of step $t$); let $GR_A(G,t):=\E_{\bm L}[\sigma_{\bm L}(\bm S_t)]$ denote the expected influence spread of the adaptive greedy policy after selecting the first $t$ seeds. For any $t\in [k-1]_0$, we have that
\begin{align}
&GR_A(G,t+1)-GR_A(G,t)\nonumber\\
&=\E_{\bm L}[\sigma(\bm S_{t+1})-\sigma(\bm S_{t})]\nonumber\\
&= \E_{\Psi_t}\left[\max_{v\in V\setminus dom(\Psi_t)}\E_{\bm L}[\sigma_{\bm L}(\{v\}\cup dom(\Psi_t))-\sigma_{\bm L}(dom(\Psi_t))|\Psi_t]\right]\label{sequ1.0}\\
&\geq \E_{\Psi_t}\left[\sum_{v\in V\setminus dom(\Psi_t)}\frac{x_v}{k}\cdot \E_{\bm L}[\sigma_{\bm L}(\{v\}\cup dom(\Psi_t))-\sigma_{\bm L}(dom(\Psi_t))|\Psi_t]\right]\nonumber\\
&= \frac{1}{k}\cdot \E_{\Psi_t}\left[\sum_{v\in V\setminus dom(\Psi_t)}x_v\cdot \E_{\bm L}[\sigma_{\bm L}(\{v\}\cup dom(\Psi_t))-\sigma_{\bm L}(dom(\Psi_t))|\Psi_t]\right]\nonumber\\
&=  \frac{1}{k}\cdot \E_{\Psi_t}\left[\sum_{v\in V\setminus dom(\Psi_t)}x_v\cdot \Delta_{\Psi_t}(v)\right]\nonumber\\
&\geq  \frac{1}{2k}\cdot \E_{\Psi_t}\left[\sum_{v\in V\setminus dom(\Psi_t)}x_v\cdot \Delta^2_{\Psi_t}(v)\right]\label{sequ2.0}\\
&\geq \frac{1}{2k}\cdot \E_{\Psi_t}\left[\E_{\bm L,\hat{\bm L}}[\sigma^{2}_{\bm L,\hat{\bm L},\Psi_t}(dom(\Psi_t)\cup dom(\hat{\Psi}_{\pi}))|\Psi_t]-\E_{\bm L}[\sigma_{\bm L}(dom(\Psi_t))|\Psi_t]\right]\label{sequ3}\\
&\geq  \frac{1}{2k}\cdot \E_{\Psi_t}\left[OPT_A(G,k)-\E_{\bm L}[\sigma_{\bm L}(dom(\Psi_t))|\Psi_t]\right]\label{sequ4}\\
&\geq  \frac{1}{2k}\cdot OPT_A(G,k)-\frac{1}{2k}\cdot \E_{\Psi_t}\left[\E_{\bm L}[\sigma_{\bm L}(dom(\Psi_t))|\Psi_t]\right]\nonumber\\
&=\frac{1}{2k}\cdot OPT_A(G,k)-\frac{1}{2k}\cdot \E_{\bm L}(\sigma_{\bm L}(\bm S_t))\nonumber\\
&=\frac{1}{2k}\cdot OPT_A(G,k)-\frac{1}{2k}\cdot GR_A(G,t),\label{sequfund}
\end{align}
where \eqref{sequ1.0} holds since the adaptive greedy strategy at each step adds the node maximizing the expected influence spread (conditioned by the partial realisation coming from the previous selected seeds), \eqref{sequ2.0} comes from Lemma \ref{scla.1}, \eqref{sequ3} comes from Lemma \ref{slem1}, \eqref{sequ4} comes from Lemma \ref{scla0}. 
Thus, by \eqref{sequfund} and some manipulations we get the following recursive relation:
\begin{equation}\label{sfundeqthm}
{GR}_A(G,t+1)\geq \frac{1}{2k}\cdot OPT_A(G,k)+\left(1-\frac{1}{2k}\right)\cdot {GR}_A(G,t),\quad \forall t\in [k-1]_0.
\end{equation}

By applying iteratively \eqref{sfundeqthm}, we get
\begin{equation*}
{GR}_A(G,k)\geq \frac{1}{2k}\cdot \sum_{t=0}^{k-1}\left(1-\frac{1}{2k}\right)^{t}\cdot OPT_A(G,k)=1-\left(1-\frac{1}{2k}\right)^{k}\cdot OPT_A(G,k),
\end{equation*}
that leads to 
\begin{equation*}
\frac{{GR}_A(G,k)}{OPT_A(G,k)}\geq 1-\left(1-\frac{1}{2k}\right)^{k} \geq 1-\frac{1}{\sqrt{e}},
\end{equation*}
and this shows the claim.
\section{Conclusions and Future Work}\label{sec_future}
In the context of adaptive optimization, we have introduced a new approach to relate the solution provided by a simple non-adaptive greedy policy with the adaptive optimum. 
The new approach allowed us to establish better bounds for the adaptive influence maximization problem under myopic feedback, specifically we improve both the approximation ratio of the non-adaptive greedy policy and the adaptivity gap. 

Our results open several research directions in the context of influence maximization and in more general adaptive optimization settings. The approximation factor of the non-adaptive (resp. adaptive) greedy algorithm is 
between our lower bound of $\frac{1}{2}\left(1-\frac{1}{e}\right) \approx 0.316$ (resp. $1-\frac{1}{\sqrt{e}}\approx 0.393$) and
the upper bound of $\frac{e^2+1}{(e+1)^2}\approx 0.606$~\cite{Peng2019}, and the adaptivity gap is between the lower bound of $\frac{e}{e-1}\approx 1.582$~\cite{Peng2019} and our upper bound of $\frac{2e}{e-1}\approx 3.164$. The first problem left open by our result is to close these gaps. Furthermore, the techniques introduced in this paper to relate non-adaptive policies with adaptive ones might be useful to find better bounds in several variants of the adaptive influence maximization problem, like a combination of the following settings: different feedback models (e.g., the full-adoption feedback), different diffusion models (e.g., the general triggering model \cite{Kempe2003}), and different graph classes. 

Finally, as shown in Section~\ref{sec_example}, we believe that our new approach could be efficiently used to analyze non-adaptive greedy algorithms in other adaptive optimization problems, like e.g. the stochastic probing problem~\cite{Bradac19,Gupta2016,Gupta2017}.

\bibliography{dangelo}

\begin{thebibliography}{10}

\bibitem{DBLP:conf/stacs/AdamczykSW14}
Marek Adamczyk, Maxim Sviridenko, and Justin Ward.
\newblock Submodular stochastic probing on matroids.
\newblock In {\em 31st International Symposium on Theoretical Aspects of
  Computer Science, {(STACS})}, pages 29--40, 2014.

\bibitem{Asadpour16}
Arash Asadpour and Hamid Nazerzadeh.
\newblock Maximizing stochastic monotone submodular functions.
\newblock {\em Management Science}, 62(8):2374--2391, 2016.

\bibitem{DBLP:conf/wine/AsadpourNS08}
Arash Asadpour, Hamid Nazerzadeh, and Amin Saberi.
\newblock Stochastic submodular maximization.
\newblock In {\em Internet and Network Economics, 4th International Workshop,
  (WINE)}, pages 477--489, 2008.

\bibitem{Badanidiyuru2016}
Ashwinkumar Badanidiyuru, Christos Papadimitriou, Aviad Rubinstein, Lior
  Seeman, and Yaron Singer.
\newblock {Locally adaptive optimization: Adaptive seeding for monotone
  submodular functions}.
\newblock {\em Proceedings of the Annual ACM-SIAM Symposium on Discrete
  Algorithms, (SODA)}, 1:414--429, 2016.

\bibitem{Bradac19}
Domagoj Bradac, Sahil Singla, and Goran Zuzic.
\newblock (near) optimal adaptivity gaps for stochastic multi-value probing.
\newblock In {\em Approximation, Randomization, and Combinatorial Optimization.
  Algorithms and Techniques (APPROX/RANDOM)}, pages 49:1--49:21, 2019.

\bibitem{Calinescu11}
Gruia C{\u{a}}linescu, Chandra Chekuri, Martin P{\'{a}}l, and Jan
  Vondr{\'{a}}k.
\newblock Maximizing a monotone submodular function subject to a matroid
  constraint.
\newblock {\em {SIAM} J. Comput.}, 40(6):1740--1766, 2011.

\bibitem{DBLP:journals/mor/ChanF09}
Carri~W. Chan and Vivek~F. Farias.
\newblock Stochastic depletion problems: Effective myopic policies for a class
  of dynamic optimization problems.
\newblock {\em Math. Oper. Res.}, 34(2):333--350, 2009.

\bibitem{CVZ14}
Chandra Chekuri, Jan Vondr{\'{a}}k, and Rico Zenklusen.
\newblock Submodular function maximization via the multilinear relaxation and
  contention resolution schemes.
\newblock {\em {SIAM} J. Comput.}, 43(6):1831--1879, 2014.

\bibitem{DBLP:series/synthesis/2013Chen}
Wei Chen, Laks V.~S. Lakshmanan, and Carlos Castillo.
\newblock {\em Information and Influence Propagation in Social Networks}.
\newblock Synthesis Lectures on Data Management. Morgan \& Claypool, 2013.

\bibitem{Chen2019}
Wei Chen and Binghui Peng.
\newblock On adaptivity gaps of influence maximization under the independent
  cascade model with full-adoption feedback.
\newblock {\em 30th International Symposium on Algorithms and Computation,
  (ISAAC)}, pages 24:1--24:19, 2019.

\bibitem{Chen2019a}
Wei Chen, Binghui Peng, Grant Schoenebeck, and Biaoshuai Tao.
\newblock {Adaptive Greedy versus Non-adaptive Greedy for Influence
  Maximization}.
\newblock In {\em The Thirty-Fourth {AAAI} Conference on Artificial
  Intelligence, (AAAI)}, 2020.

\bibitem{Chen10}
Wei Chen, Chi Wang, and Yajun Wang.
\newblock Scalable influence maximization for prevalent viral marketing in
  large-scale social networks.
\newblock In {\em Proceedings of the 16th {ACM} {SIGKDD} International
  Conference on Knowledge Discovery and Data Mining}, pages 1029--1038, 2010.

\bibitem{Cohen14}
Edith Cohen, Daniel Delling, Thomas Pajor, and Renato~F. Werneck.
\newblock Sketch-based influence maximization and computation: Scaling up with
  guarantees.
\newblock In {\em Proceedings of the 23rd {ACM} International Conference on
  Conference on Information and Knowledge Management, (CIKM)}, pages 629--638,
  2014.

\bibitem{DPV21}
Gianlorenzo D'Angelo, Debashmita Poddar, and Cosimo Vinci.
\newblock Better bounds on the adaptivity gap of influence maximization under
  full-adoption feedback.
\newblock In {\em AAAI Conference on Artificial Intelligence, (AAAI)}, 2021.
\newblock To appear.

\bibitem{DBLP:conf/soda/DeanGV05}
Brian~C. Dean, Michel~X. Goemans, and Jan Vondr{\'{a}}k.
\newblock Adaptivity and approximation for stochastic packing problems.
\newblock In {\em Proceedings of the Sixteenth Annual {ACM-SIAM} Symposium on
  Discrete Algorithms, (SODA)}, pages 395--404, 2005.

\bibitem{DBLP:journals/mor/DeanGV08}
Brian~C. Dean, Michel~X. Goemans, and Jan Vondr{\'{a}}k.
\newblock Approximating the stochastic knapsack problem: The benefit of
  adaptivity.
\newblock {\em Math. Oper. Res.}, 33(4):945--964, 2008.

\bibitem{Domingos2001}
Pedro Domingos and Matt Richardson.
\newblock {Mining the network value of customers}.
\newblock {\em Proceedings of the Seventh ACM SIGKDD International Conference
  on Knowledge Discovery and Data Mining}, pages 57--66, 2001.

\bibitem{DBLP:conf/latin/GoemansV06}
Michel~X. Goemans and Jan Vondr{\'{a}}k.
\newblock Stochastic covering and adaptivity.
\newblock In {\em Theoretical Informatics, 7th Latin American Symposium,
  (LATIN)}, pages 532--543, 2006.

\bibitem{Goldberg2013}
Sharon Goldberg and Zhenming Liu.
\newblock The diffusion of networking technologies.
\newblock In {\em Proceedings of the Twenty-Fourth Annual {ACM-SIAM} Symposium
  on Discrete Algorithms, (SODA)}, pages 1577--1594, 2013.

\bibitem{Golovin2011a}
Daniel Golovin and Andreas Krause.
\newblock {Adaptive submodularity: Theory and applications in active learning
  and stochastic optimization}.
\newblock {\em Journal of Artificial Intelligence Research}, 42:427--486, 2011.

\bibitem{Goyal11}
Amit Goyal, Wei Lu, and Laks V.~S. Lakshmanan.
\newblock {CELF++:} optimizing the greedy algorithm for influence maximization
  in social networks.
\newblock In {\em Proceedings of the 20th International Conference on World
  Wide Web, (WWW)}, pages 47--48, 2011.

\bibitem{DBLP:conf/icml/GuilloryB10}
Andrew Guillory and Jeff~A. Bilmes.
\newblock Interactive submodular set cover.
\newblock In {\em Proceedings of the 27th International Conference on Machine
  Learning, (ICML)}, pages 415--422, 2010.

\bibitem{Gupta2016}
Anupam Gupta, Viswanath Nagarajan, and Sahil Singla.
\newblock {Algorithms and Adaptivity Gaps for Stochastic Probing}.
\newblock {\em Proceedings of the twenty-seventh annual ACM-SIAM symposium on
  Discrete algorithms, (SODA)}, 151:1731--1747, 2016.

\bibitem{Gupta2017}
Anupam Gupta, Viswanath Nagarajan, and Sahil Singla.
\newblock {Adaptivity gaps for stochastic probing: Submodular and XOS
  functions}.
\newblock {\em Proceedings of the Annual ACM-SIAM Symposium on Discrete
  Algorithms, (SODA)}, pages 1688--1702, 2017.

\bibitem{Han2018a}
Kai Han, Keke Huang, Xiaokui Xiao, Jing Tang, Aixin Sun, and Xueyan Tang.
\newblock {Efficient algorithms for adaptive influence maximization}.
\newblock {\em Proceedings of the VLDB Endowment}, 11(9):1029--1040, 2018.

\bibitem{DBLP:conf/ciac/HellersteinKL15}
Lisa Hellerstein, Devorah Kletenik, and Patrick Lin.
\newblock Discrete stochastic submodular maximization: Adaptive vs.
  non-adaptive vs. offline.
\newblock In {\em Algorithms and Complexity - 9th International Conference,
  (CIAC)}, pages 235--248, 2015.

\bibitem{Kempe2003}
David Kempe, Jon Kleinberg, and {\'{E}}va Tardos.
\newblock Maximizing the spread of influence through a social network.
\newblock In {\em Proceedings of the Ninth {ACM} {SIGKDD} International
  Conference on Knowledge Discovery and Data Mining}, pages 137--146. {ACM},
  2003.

\bibitem{Kempe2015a}
David Kempe, Jon Kleinberg, and {\'{E}}va Tardos.
\newblock {Maximizing the spread of influence through a social network}.
\newblock {\em Theory of Computing}, 11:105--147, 2015.

\bibitem{Leskovec2007}
Jure Leskovec, Andreas Krause, Carlos Guestrin, Christos Faloutsos, Jeanne~M.
  VanBriesen, and Natalie~S. Glance.
\newblock Cost-effective outbreak detection in networks.
\newblock In {\em Proceedings of the 13th {ACM} {SIGKDD} International
  Conference on Knowledge Discovery and Data Mining}, pages 420--429, 2007.

\bibitem{Li2018}
Y.~{Li}, J.~{Fan}, Y.~{Wang}, and K.~{Tan}.
\newblock Influence maximization on social graphs: A survey.
\newblock {\em IEEE Transactions on Knowledge and Data Engineering},
  30(10):1852--1872, 2018.

\bibitem{Nguyen16}
Hung~T. Nguyen, My~T. Thai, and Thang~N. Dinh.
\newblock Stop-and-stare: Optimal sampling algorithms for viral marketing in
  billion-scale networks.
\newblock In {\em Proceedings of the 2016 International Conference on
  Management of Data, (SIGMOD)}, pages 695--710, 2016.

\bibitem{DBLP:journals/corr/abs-1803-07639}
Srinivasan Parthasarathy.
\newblock An analysis of the greedy algorithm for stochastic set cover.
\newblock {\em CoRR}, abs/1803.07639, 2018.

\bibitem{Peng2019}
Binghui Peng and Wei Chen.
\newblock Adaptive influence maximization with myopic feedback.
\newblock {\em Advances in Neural Information Processing Systems 32: Annual
  Conference on Neural Information Processing System, (NeurIPS)}, pages
  5575--5584, 2019.

\bibitem{Richardson2002}
Matthew Richardson and Pedro~M. Domingos.
\newblock Mining knowledge-sharing sites for viral marketing.
\newblock In {\em Proceedings of the Eighth {ACM} {SIGKDD} International
  Conference on Knowledge Discovery and Data Mining}, pages 61--70, 2002.

\bibitem{Rubinstein2015}
Aviad Rubinstein, Lior Seeman, and Yaron Singer.
\newblock {Approximability of adaptive seeding under knapsack constraints}.
\newblock {\em Proceedings of the 2015 ACM Conference on Economics and
  Computation, (EC)}, pages 797--814, 2015.

\bibitem{Salha2018}
Guillaume Salha, Nikolaos Tziortziotis, and Michalis Vazirgiannis.
\newblock {Adaptive submodular influence maximization with myopic feedback}.
\newblock {\em Proceedings of the 2018 IEEE/ACM International Conference on
  Advances in Social Networks Analysis and Mining (ASONAM)}, pages 455--462,
  2018.

\bibitem{Seeman2013}
Lior Seeman and Yaron Singer.
\newblock {Adaptive seeding in social networks}.
\newblock {\em Proceedings - Annual IEEE Symposium on Foundations of Computer
  Science, (FOCS)}, pages 459--468, 2013.

\bibitem{Singer2016}
Yaron Singer.
\newblock {Influence maximization through adaptive seeding}.
\newblock {\em ACM SIGecom Exchanges}, 15(1):32--59, 2016.

\bibitem{Sun2018}
Lichao Sun, Weiran Huang, Philip~S. Yu, and Wei Chen.
\newblock {Multi-round influence maximization}.
\newblock {\em Proceedings of the ACM SIGKDD International Conference on
  Knowledge Discovery and Data Mining}, pages 2249--2258, 2018.

\bibitem{Tang2019}
Jing Tang, Laks~V.S. Lakshmanan, Keke Huang, Xueyan Tang, Aixin Sun, Xiaokui
  Xiao, and Andrew Lim.
\newblock {Efficient approximation algorithms for adaptive seed minimization}.
\newblock {\em Proceedings of the ACM SIGMOD International Conference on
  Management of Data}, pages 1096--1113, 2019.

\bibitem{Tang15}
Youze Tang, Yanchen Shi, and Xiaokui Xiao.
\newblock Influence maximization in near-linear time: {A} martingale approach.
\newblock In {\em Proceedings of the 2015 {ACM} {SIGMOD} International
  Conference on Management of Data}, pages 1539--1554, 2015.

\bibitem{Tang2014}
Youze Tang, Xiaokui Xiao, and Yanchen Shi.
\newblock {Influence Maximization : Near-Optimal Time Complexity Meets
  Practical Efficiency}.
\newblock {\em SIGMOD '14: Proceedings of the 2014 ACM SIGMOD International
  Conference on Management of Data}, pages 75--86, 2014.

\bibitem{Tong2019}
Guangmo Tong and Ruiqi Wang.
\newblock {Adaptive Influence Maximization under General Feedback Models}.
\newblock In {\em arXiv: 1902.00192v3}, 2019.

\bibitem{Tong2015}
Guangmo Tong, Ruiqi Wang, Zheng Dong, and Xiang Li.
\newblock {Time-constrained Adaptive Influence Maximization}.
\newblock In {\em arXiv: 2001.01742v2}, 2020.

\bibitem{Tong2017}
Guangmo Tong, Weili Wu, Shaojie Tang, and Ding~Zhu Du.
\newblock {Adaptive Influence Maximization in Dynamic Social Networks}.
\newblock {\em IEEE/ACM Transactions on Networking}, 25(1):112--125, 2017.

\bibitem{DBLP:journals/corr/VaswaniL16}
Sharan Vaswani and Laks V.~S. Lakshmanan.
\newblock Adaptive influence maximization in social networks: Why commit when
  you can adapt?
\newblock {\em CoRR}, abs/1604.08171, 2016.

\bibitem{Yuan2017}
Jing Yuan and Shaojie Tang.
\newblock {No time to observe: Adaptive influence maximization with partial
  feedback}.
\newblock {\em IJCAI International Joint Conference on Artificial
  Intelligence}, pages 3908--3914, 2017.

\end{thebibliography}
\newpage
\appendix
\section{Missing Proofs from Section \ref{sec_example}}
\subsection*{Proof of Lemma \ref{lemprel1}}
As $\mathbb{P}[\bm \rho=i]=\frac{x_i}{k}$ for any $i\in [n]$, we have that 
\begin{align*}
&\E_{\bm \theta,\bm \rho}[f(\bm\theta(S\cup\{\bm \rho\}))]-\E_{\bm \theta}[f(\bm\theta(S))]\\
&=k\cdot \E_{\bm \rho}[\E_{\bm\theta}[f(\bm\theta(S\cup{\bm \rho}))-f(\bm\theta(S))]]\\
&=k\cdot \sum_{i\in [n]\setminus S} \mathbb{P}[{\bm \rho}=i]\cdot \E_{\bm\theta}[\Delta(i|\bm\theta(S))]\\
&=k\cdot \sum_{i\in [n]\setminus S} \frac{x_i}{k}\cdot \E_{\bm\theta}[\Delta(i|\bm\theta(S))]\\
&=\sum_{i\in [n]\setminus S} x_i\cdot \E_{\bm\theta}[\Delta(i|\bm\theta(S))].
\end{align*}
\subsection*{Proof of Lemma \ref{lemprel2}}
It is sufficient showing that $\E_{\hat{\bm \theta}}[f(\xi\vee\hat{\bm \theta}(U_{\hat{\bm\theta},k}(\pi)\cup S))]\leq \E_{\hat{\bm \theta}}[f(\xi\vee\hat{\bm \theta}(S)]+\sum_{i\in [n]\setminus S}x_i\cdot \Delta(i|\xi)$ for any partial state $\xi\sim \bm\theta(S)$, so that, by doing the expectation on $\xi\sim\bm\theta(S)$, the claim follows. 

We observe that $\E_{\hat{\bm \theta}}[f(\xi\vee\hat{\bm \theta}(U_{\hat{\bm\theta},k}(\pi)\cup S))]$ can be rewritten as the expected value $\E_{\hat{\bm \theta}}[f(\xi\vee\hat{\bm \theta}(S))]$ coming from the selection of $S$, plus the sum of expected increments of $f$ caused by all the nodes $i\in [n]\setminus S$ selected by policy $\pi$, when observing partial states from $\hat{\bm \theta}$. We have that the second sum can be written as  $$\sum_{i\in [n]\setminus S}\sum_{\hat{\xi}}\chi(i|\hat{\xi})\cdot \Delta(i|\xi\vee \hat{\xi}),$$ where $\chi(i|\hat{\xi})\in \{0,1\}$ is the indicator random variable that is equal to $1$ if policy $\pi$ visits state $\hat{\xi}$ at some step of the execution and then selects node $i$ (i.e., $i=\pi(\hat{\xi})$), and $\chi(i|\hat{\xi})=0$ otherwise. Thus,
\begin{align}
&\E_{\hat{\bm \theta}}[f(\xi\vee\hat{\bm \theta}(U_{\hat{\bm\theta},k}(\pi)\cup S))]\nonumber\\
&=\E_{\hat{\bm \theta}}[f(\xi\vee\hat{\bm \theta}(S))]+\sum_{i\in [n]\setminus S}\sum_{\hat{\xi}}\chi(i|\hat{\xi})\cdot \Delta(i|\xi\vee \hat{\xi}).\label{xi1}
\end{align}
For any partial state $\hat{\xi}$ we have that
\begin{align}
&\Delta(i|\xi\vee \hat{\xi})]\nonumber\\
&=\E_{\bm e^i}[f(\xi \vee \hat{\xi}\vee \bm e^{i})-f(\xi\vee \hat{\xi})]\nonumber\\
&=\E_{\bm e^i}[f((\xi \vee \hat{\xi})\vee (\xi\vee \bm e^{i}))-f(\xi\vee \hat{\xi})]\nonumber\\
&\leq \E_{\bm e^i}[f(\xi \vee \bm e^{i})-f((\xi\vee \hat{\xi})\wedge (\xi\vee \bm e^{i}))]\label{xi1.1}\\
&\leq  \E_{\bm e^i}[f(\xi \vee \bm e^{i})-f(\xi)]\nonumber\\
&=\Delta(i|\xi),\label{xi2}
\end{align}
where \eqref{xi1.1} holds since $f$ is a monotone submodular value function: indeed, we have that $f(x\vee y)+f(x\wedge y)\leq f(x)+f(y)$ with $x:=(\xi \vee \hat{\xi})$ and $y:= (\xi\vee \bm e^{i})$, thus showing \eqref{xi1.1}. Furthermore, by exploiting the fact that $x_i$ is the probability that an item $i$ is selected in some step of policy $\pi$, we get
\begin{equation}\label{xi3}
\sum_{\hat{\xi}}\chi(i|\hat{\xi})=x_i.
\end{equation}
Finally, by combining \eqref{xi1}, \eqref{xi2}, and \eqref{xi3}, we get
\begin{align*}
&\E_{\hat{\bm \theta}}[f(\xi\vee\hat{\bm \theta}(U_{\hat{\bm\theta},k}(\pi)\cup S))]\\
&=\E_{\hat{\bm \theta}}[f(\xi\vee\hat{\bm \theta}(S))]+\sum_{i\in [n]\setminus S}\sum_{\hat{\xi}}\chi(i|\hat{\xi})\cdot \Delta(i|\xi\vee \hat{\xi})\\
&\leq \E_{\hat{\bm \theta}}[f(\xi\vee\hat{\bm \theta}(S))]+\sum_{i\in [n]\setminus S}\sum_{\hat{\xi}}\chi(i|\hat{\xi})\cdot \Delta(i|\xi)\\
&=\E_{\hat{\bm \theta}}[f(\xi\vee\hat{\bm \theta}(S))]+\sum_{i\in [n]\setminus S}\left(\sum_{\hat{\xi}}\chi(i|\hat{\xi})\right)\cdot \Delta(i|\xi)\\
&=\E_{\hat{\bm \theta}}[f(\xi\vee\hat{\bm \theta}(S))]+\sum_{i\in [n]\setminus S}x_i\cdot \Delta(i|\xi),
\end{align*}
thus showing the claim. 
\subsection*{Proof of Lemma \ref{lemprel3}}
The claim easily follows from the fact that $\hat{\bm \theta}(U_{\hat{\bm\theta},k}(\pi))$ is componentwise non-higher than $\bm\theta(S)\vee \hat{\bm \theta}(U_{\hat{\bm\theta},k}(\pi)\cup S)$, thus, as $f$ is monotone we get $OPT_A(k)=\E_{\hat{\bm \theta}}[f(\hat{\bm \theta}(U_{\hat{\bm\theta},k}(\pi)))]\leq \E_{\bm\theta, \hat{\bm \theta}}[f(\bm\theta(S)\vee \hat{\bm \theta}(U_{\hat{\bm\theta},k}(\pi)\cup S))].$ 
\section{Missing Proofs from Section \ref{sec_prel}}
\subsection*{Proof of Lemma \ref{cla3}}
To show the lemma, we connect the adaptive submodularity of the 2-level diffusion model, with the (non-adaptive) submodularity of the ordinary diffusion model shown by Kempe~et~al.~\cite{Kempe2015a}. They observed that, for any live-edge graph $L$,  the expected influence spread function $\sigma_L$ is a submodular set function: for any $U,Y,Z\subseteq V$ such that $U\subseteq Y$ we have that 
\begin{equation}\label{submod}
\sigma_L(U\cup Z)-\sigma_L(U)\geq \sigma_L(Y\cup Z)-\sigma_L(Y).
\end{equation}
Fix $S\subseteq V$, two partial realisations $\hat{\psi},\hat{\psi}'$ with $\hat{\psi}\subseteq \hat{\psi}'$, and $v\in V$. Given a partial realisation $\overline{\psi}$, let $\overline{\psi}_{\setminus S}$ denote the subrealisation of $\overline{\psi}$ with $dom(\overline{\psi}_{\setminus S})=dom(\overline{\psi})\setminus S$. We~get
\begin{align}
&\Delta^2_S(v|\hat{\psi})\nonumber\\
&=\E_{\bm L,\hat{\bm L}}\left[\sigma_{\bm L^2(\{v\}\cup dom(\hat{\psi})\setminus S)}(\{v\}\cup S\cup dom(\hat{\psi}))-\sigma_{\bm L^2(dom(\hat{\psi})\setminus S)}(S\cup dom(\hat{\psi}))|\hat{\psi}\subseteq \hat{\Phi}\right]\nonumber\\
&=\E_{\hat{\Phi}(v)}\left[\E_{\bm L}[\sigma_{\bm L}(\hat{\Phi}(v)\cup S\cup Im(\hat{\psi}_{\setminus S}))-\sigma_{\bm L}(S\cup  Im(\hat{\psi}_{\setminus S}))]\right]\nonumber\\
&\geq \E_{\hat{\Phi}(v)}\left[\E_{\bm L}[\sigma_{\bm L}(\hat{\Phi}(v)\cup S\cup Im(\hat{\psi}'_{\setminus S}))-\sigma_{\bm L}( S\cup Im(\hat{\psi}'_{\setminus S}))]\right]\label{adsub}\\
&=\Delta_S^2(v|\hat{\psi}'),\nonumber
\end{align}
where \eqref{adsub} holds by non-adaptive submodularity: for $U:=S\cup Im(\hat{\psi}_{\setminus S})$, $Y:=S\cup Im(\hat{\psi}'_{\setminus S})$, and $Z:=\hat{\Phi}(v)$, we have that $U\subseteq Y$, thus we can apply \eqref{submod} to derive \eqref{adsub}. 
\subsection*{Proof of Lemma \ref{cla1}}
By using the fact that the random graphs $\bm L$ and $\hat{\bm L}$ are independent and identically distributed, we have that 
\begin{align*}
&\sigma^2(S)=\mathbb{E}_{\bm L^2(S)}[\sigma_{\bm L^2(S)}(S)]\\
&\leq \mathbb{E}_{\bm L,\bm \hat{\bm L}}[\sigma_{\bm L}(S)+ \sigma_{\{{\bm L}\setminus \{(u,v)\in E:u\in S\}\}\cup \{{\hat{\bm L}}\cap \{(u,v)\in E:u\in S\}\}}(S)]\\
&\leq \mathbb{E}_{\bm L}[\sigma_{\bm L}(S)]+\mathbb{E}_{\bm \hat{\bm L}}[\sigma_{\hat{\bm L}}(S)]=2\cdot \sigma(S).
\end{align*}
\subsection*{Proof of Lemma \ref{cla.1}}
Let $S\subseteq V$ and $v\in V$. Given a live-edge graph $L$ and a set $S\subseteq [n]$, let $L_{\setminus S}$ be another live edge graph obtained from $L$ after removing all the nodes that would have been activated if $S$ was the set of seeds, and after removing all the edges adjacent to such activated nodes (we observe that, if $v\notin L_{\setminus S}$, then $\sigma_L(\{v\})=0$). We have that 
\begin{align}
&\Delta^2_S(v)\nonumber\\
&= \E_{\bm L^2(\{v\})_{\setminus S}}[\sigma_{\bm L^2(\{v\})_{\setminus S}}(\{v\})]\nonumber\\
&\leq 2\cdot \E_{\bm L_{\setminus S}}[\sigma_{\bm L_{\setminus S}}(\{v\})]\label{double}\\
&=2\cdot \Delta_S(v),\nonumber
\end{align}
where \eqref{double} can be shown by using analogous arguments as in the proof of Lemma \ref{cla1}. 
\subsection*{Proof of Lemma \ref{lem2}}
We have
\begin{align}
&k \cdot (\E_{\bm \rho}[\sigma(\{\bm \rho\}\cup S)]-\sigma(S))\nonumber\\
&=k \cdot (\E_{\bm \rho}[\E_{\bm L}[\sigma(\{\bm \rho\}\cup S)-\sigma(S)]])\nonumber\\
& =k\cdot \E_{\bm \rho}[\Delta_S(\bm \rho )]\nonumber\\
&=k\cdot \sum_{v\in V\setminus S}\frac{x_v}{k} \cdot \Delta_S(v)\nonumber\\
&= \sum_{v\in V\setminus S}{x_v}\cdot \Delta_S(v)\nonumber.
\end{align}
\subsection*{Proof of Lemma \ref{lem1}}
We observe that $\E_{\bm L,\hat{\bm L}}[\sigma^2_{\bm L,\hat{\bm L}}(dom(\hat{\Psi}_{\pi})\cup S)]$ can be rewritten as the expected influence spread $\sigma^2(S)$ coming from the selection of $S$ under the $2$-level diffusion model, plus the sum of the expected increments of the influence spread under the $2$-level diffusion model caused by all the nodes $v\in V\setminus S$ selected by policy $\pi$, when observing partial realisations distributed according to $\hat{\Phi}$. We observe that the second sum is at most equal to $$\sum_{v\in V\setminus S}\sum_{\hat{\psi}'}\chi(v|\hat{\psi}')\cdot \Delta^2_S(v|\hat{\psi}'),$$ where $\chi(v|\hat{\psi}')\in \{0,1\}$ is the indicator random variable that is equal to $1$ if and only if policy $\pi$ visits partial realisation $\hat{\psi}'$ at some step of the execution and then selects node $i$ (i.e., $i=\pi(\hat{\psi}')$). Thus, we have that
\begin{align}
\E_{\bm L,\hat{\bm L}}[\sigma^2_{\bm L,\hat{\bm L}}(dom(\hat{\Psi}_{\pi})\cup S)]\leq \sigma^2(S)+\sum_{v\in V\setminus S}\sum_{\hat{\psi}'}\chi(v|\hat{\psi}')\cdot \Delta^2_S(v|\hat{\psi}').\label{ciao1}
\end{align}
As the probability that each node $v\in V$ is selected by policy $\pi$ is $x_v$, we have that 
\begin{equation}\label{ciao2}
\sum_{\hat{\psi}'}\chi(v|\hat{\psi}')=x_v.
\end{equation}
We have that,
\begin{align}
&\sum_{v\in V\setminus S}\sum_{\hat{\psi}'}\chi(v|\hat{\psi}')\cdot \Delta^2_S(v|\hat{\psi}')\nonumber\\
&\leq \sum_{v\in V\setminus S}\sum_{\hat{\psi}'}\chi(v|\hat{\psi}')\cdot \Delta^2_S(v|\emptyset)\label{ciao3}\\
&\leq \sum_{v\in V\setminus S}\left(\sum_{\hat{\psi}'}\chi(v|\hat{\psi}')\right)\cdot \Delta^2_S(v|\emptyset)\nonumber\\
&=\sum_{v\in V\setminus S}x_v\cdot \Delta^2_S(v),\label{ciao4}
\end{align}
where (\ref{ciao3}) follows from Lemma \ref{cla3} (i.e., from the adaptive submodularity, and $\hat{\psi}:=\emptyset$ is the empty partial realisation), and \eqref{ciao4} holds by \eqref{ciao2}.
By applying \eqref{ciao4} to \eqref{ciao1}, we get
\begin{align*}
&\E_{\bm L,\hat{\bm L}}[\sigma^2_{\bm L,\hat{\bm L}}(dom(\hat{\Psi}_{\pi})\cup S)]\\
&\leq\sigma^2(S)+\sum_{v\in V\setminus S}\sum_{\hat{\psi}'}\chi(v|\hat{\psi}')\cdot \Delta^2_S(v|\hat{\psi}')\\
&\leq \sigma^2(S)+\sum_{v\in V\setminus S}x_v\cdot \Delta^2_S(v),
\end{align*}
and this concludes the proof. 
\subsection*{Proof of Lemma \ref{cla0}}
We have that $OPT_A(G,k)=\E_{\bm L,\hat{\bm L}}[\sigma^2_{\bm L,\hat{\bm L}}(dom(\hat{\Psi}_{\pi}))]\leq \E_{\bm L,\hat{\bm L}}[\sigma^2_{\bm L,\hat{\bm L}}(dom(\hat{\Psi}_{\pi})\cup S)].$
\section{Missing Proofs from Section \ref{ad-sec}}
\subsection*{Proof of Lemma \ref{scla3}}
This lemma can be shown analogously to Lemma \ref{cla3}. Fix two partial realisations $\hat{\psi},\hat{\psi}'$ with $\hat{\psi}\subseteq \hat{\psi}'$, a partial realisation $\psi$, and $v\in V$.  Given a partial realisation $\overline{\psi}$, let $\overline{\psi}_{\setminus \psi}$ denote the subrealisation of $\overline{\psi}$ with $dom(\overline{\psi}_{\setminus \psi})=dom(\overline{\psi})\setminus dom(\psi)$.We~get
\begin{align}
&{\Delta}_{\psi}^2(v|\hat{\psi})\nonumber\\
&=\E_{\hat{\Phi}(v)}\left[\E_{\bm L}[\sigma_{\bm L}(\hat{\Phi}(v)\cup Im(\hat{\psi}_{\setminus \psi})\cup dom(\psi))-\sigma_{\bm L}(Im(\hat{\psi}_{\setminus \psi})\cup dom(\psi))|\psi\subseteq \Phi]\right]\nonumber\\
&\geq \E_{\hat{\Phi}(v)}\left[\E_{\bm L}[\sigma_{\bm L}(\hat{\Phi}(v)\cup Im(\hat{\psi}'_{\setminus \psi})\cup dom(\psi))-\sigma_{\bm L}(Im(\hat{\psi}'_{\setminus \psi})\cup dom(\psi))|\psi\subseteq \Phi]\right]\label{snonsub}\\
&=\Delta^2_\psi(v|\hat{\psi}'),\nonumber
\end{align}
where \eqref{snonsub} holds since $\sigma_{\bm L}$ is a submodular set function. 
\subsection*{Proof of Lemma \ref{slem1}}
We observe that $\E_{\bm L,\hat{\bm L}}[\sigma^{2}_{\bm L,\hat{\bm L},\psi}(dom(\psi)\cup dom(\hat{\Psi}_{\pi}))|~\psi\subseteq\Phi]$ can be rewritten as the expected influence spread $\E_{\bm L}[\sigma_{\bm L}(dom(\psi))|\psi~\subseteq~\Phi]$ conditioned by $\psi$, plus the sum of expected increments of the influence spread under the strong $2$-level diffusion model caused by all the nodes $v\in V\setminus dom(\psi)$ selected by policy $\pi$, conditioned by $\psi$. We observe that the second sum can be written as $$\sum_{v\in V\setminus dom(\psi)}\sum_{\hat{\psi}'}\chi(v|\hat{\psi}')\cdot \Delta^2_\psi(v|\hat{\psi}'),$$ 
where $\chi(v|\hat{\psi}')\in \{0,1\}$ is the indicator random variable defined as in the proof of Lemma \ref{lem1} (i.e., equal to $1$ if and only if policy $\pi$ visits partial realisation $\hat{\psi}'$ at some step of the execution and then selects node $i$). 
Thus, we have that
\begin{align}
&\E_{\bm L,\hat{\bm L}}[\sigma^{2}_{\bm L,\hat{\bm L},\psi}(dom(\psi)\cup dom(\hat{\Psi}_{\pi}))|~\psi\subseteq\Phi]\nonumber\\
&=\E_{\bm L}[\sigma_{\bm L}(dom(\psi))|\psi~\subseteq~\Phi]+\sum_{v\in V\setminus dom(\psi)}\sum_{\hat{\psi}'}\chi(v|\hat{\psi}')\cdot \Delta^2_\psi(v|\hat{\psi}').\label{sciao1}
\end{align}
As the probability that each node $v\in V$ is selected by policy $\pi$ is $x_v$, we have that 
\begin{equation}\label{sciao2}
\sum_{\hat{\psi}'}\chi(v|\hat{\psi}')=x_v.
\end{equation}
Thus, we have that
\begin{align}
&\sum_{v\in V\setminus dom(\psi)}\sum_{\hat{\psi}'}\chi(v|\hat{\psi}')\cdot \Delta^2_\psi(v|\hat{\psi}')\nonumber\\
&\leq \sum_{v\in V\setminus dom(\psi)}\sum_{\hat{\psi}'}\chi(v|\hat{\psi}')\cdot \Delta^2_\psi(v|\emptyset)\label{sciao3}\\
&= \sum_{v\in V\setminus dom(\psi)}\left(\sum_{\hat{\psi}'}\chi(v|\hat{\psi}')\right)\cdot \Delta^2_\psi(v|\emptyset)\nonumber\\
&=\sum_{v\in V\setminus dom(\psi)}x_v\cdot \Delta^2_\psi(v),\label{sciao4}
\end{align}
where (\ref{sciao3}) follows from Lemma \ref{scla3} (i.e., from the strong adaptive submodularity), and \eqref{sciao4} holds by \eqref{sciao2}.
By applying \eqref{sciao4} to \eqref{sciao1}, we get
\begin{align*}
&\E_{\bm L,\hat{\bm L}}[\sigma^{2}_{\bm L,\hat{\bm L},\psi}(dom(\psi)\cup dom(\hat{\Psi}_{\pi}))|~\psi\subseteq\Phi]\nonumber\\
&=\E_{\bm L}[\sigma_{\bm L}(dom(\psi))|\psi~\subseteq~\Phi]+\sum_{v\in V\setminus dom(\psi)}\sum_{\hat{\psi}'}\chi(v|\hat{\psi}')\cdot \Delta^2_\psi(v|\hat{\psi}')\\
&\leq \E_{\bm L}[\sigma_{\bm L}(dom(\psi))|\psi~\subseteq~\Phi]+\sum_{v\in V\setminus dom(\psi)}x_v\cdot \Delta^2_\psi(v),
\end{align*}
thus the claim follows. 
\end{document}